\documentclass{article}

\usepackage{array}

\usepackage{arxiv}
\usepackage{comment}        
\usepackage[utf8]{inputenc} 
\usepackage[T1]{fontenc}    
\usepackage{hyperref}       
\usepackage{url}            
\usepackage{booktabs}       
\usepackage{amsfonts}       
\usepackage{bbold}          
\usepackage{amsmath}        
\usepackage{amsthm}         
\usepackage{stmaryrd}       
\usepackage{nicefrac}       
\usepackage{microtype}      
\usepackage{lipsum}		    
\usepackage{graphicx}
\usepackage{natbib}
\usepackage{doi}
\usepackage{amsmath}
\usepackage{mathtools}      
\usepackage{float}
\usepackage[dvipsnames]{xcolor}
\usepackage{enumitem} 
\usepackage{multirow}
\usepackage{breqn} 

\newtheorem{theorem}{Theorem}[section] 

\newtheorem{corollary}{Corollary}[theorem]

\title{Financial Stochastic Models Diffusion: From Risk-Neutral to Real-World Measure}


\author{ {Mohamed Ben~Alaya} \\
	LMRS, UMR 6085 CNRS\\ 
	Université de Rouen Normandie\\
	1 Rue Thomas Becket, 76130 Mont-Saint-Aignan\\
	\texttt{mohamed.ben-alaya@univ-rouen.fr} \\
	\And
{Ahmed ~Kebaier} \\ 
	LAMME, UMR 8071\\
	Université d'Evry\\
	23 Bd. de France, 91037 Évry Cedex\\
	\texttt{ahmed.kebaier@univ-evry.fr} \\
 	\And
{Djibril~Sarr} \\
	LAGA, UMR  7539 CNRS\\
	Université Sorbonne Paris Nord\\
	99 Av. Jean Baptiste Clément, 93430 Villetaneuse\\
	FBH Associés \\
	11 Rue du 4 septembre, 75002 Paris\\
	\texttt{sarr@math.univ-paris13.fr} \\
	\texttt{djibril.sarr@fbh-associes.com} \\
}



\hypersetup{
pdftitle={Financial Stochastic Models Diffusion: From Risk-Neutral to Real-World Measure},
pdfsubject={Spreads Models, Real-World, Financial modeling},
pdfauthor={Djibril SARR},
pdfkeywords={Real-World, Girsanov, Credit spreads},
}

\begin{document}
\maketitle

\begin{abstract}

This research presents a comprehensive framework for transitioning financial diffusion models from the risk-neutral (RN) measure to the real-world (RW) measure, leveraging results from probability theory, specifically Girsanov's theorem. The RN measure, fundamental in derivative pricing, is contrasted with the RW measure, which incorporates risk premiums and better reflects actual market behavior and investor preferences, making it crucial for risk management. We address the challenges of incorporating real-world dynamics into financial models, such as accounting for market premiums, producing realistic term structures of market indicators, and fitting any arbitrarily given market curve. Our framework is designed to be general, applicable to a variety of diffusion models, including those with non-additive noise such as the CIR++ model. 
Through case studies involving Goldman Sachs' 2024 global credit outlook forecasts and the European Banking Authority (EBA) 2023 stress tests, we validate the robustness, practical relevance and applicability of our methodology. This work contributes to the literature by providing a versatile tool for better risk measures and enhancing the realism of financial models under the RW measure. Our model's versatility extends to stress testing and scenario analysis, providing practitioners with a powerful tool to evaluate various what-if scenarios and make well-informed decisions, particularly in pricing and risk management strategies.

\end{abstract}

\keywords{Real-world, Risk-neutral, CIR++ intensity, Credit spreads, Quantitative finance}

\section{Introduction}
\label{sec:SOA}
Stochastic financial modeling has historically been done under the risk-neutral (RN) measure $\mathbb{Q}$, which is a fundamental concept in financial mathematics. This is mainly due to the fact that major stochastic models, especially those aiming at the diffusion of the instantaneous interest rate, are primarily used for market-related applications. Indeed, RN modeling is very useful when pricing instruments and when making classic risk computations (e.g., fair-value losses, CVA computation) as it ensures arbitrage-free modeling 
and that discounted prices of instruments (e.g., the discounted option price in the Black-Scholes framework, \cite{black1973pricing}) are martingales. Also, in general, RN modeling has many practical advantages. By assuming that all securities grow at the risk-free rate in the risk-neutral world, pricing derivatives becomes equivalent to finding the expected value of their discounted payoffs under this measure. Many financial models are analytically tractable under the risk-neutral measure, allowing for the derivation of analytical formulas. They are, in general, more straightforward to implement.

The Risk-Neutral (RN) probability measure is often contrasted with the Real-World measure (RW). The latter, often denoted as $\mathbb{P}$, tries to reflect the actual behavior of real financial markets. For instance, \cite{rw_g2++} introduce the RW measure by including the market price of risks in the drift of the instantaneous interest rate described by a one-factor short rate RN-model, making the models more complex but also more realistic. Indeed, unlike the risk-neutral measure where the expected return is the risk-free rate, the real-world measure includes risk premiums. These are the additional returns that investors expect to compensate for the risk they undertake. This makes the real-world measure more reflective of actual investor behavior and market dynamics. 
In a more general definition, that goes beyond interest rate models, we can define the RW measure as the measure that takes into consideration the current observed behavior and/or the future expected behavior of market indicators (interest rates, risk indicators, investor preferences, etc.).
It is becoming increasingly established that due to pricing needs, the focus has historically been on the $\mathbb{Q}$-measure, as we have already explained, it is often mandatory for pricing. However, for applications that require realistic scenarios, such as regulatory compliance, capital requirement calculations, long-term financial planning, risk management, macroeconomic forecasting, and performance scenario analyses, the $\mathbb{P}$-measure is crucial (e.g., \citep{rw_g2++, rw_hull}). It helps in understanding how financial instruments and portfolios might perform under various real-world conditions. For instance, the authors in \cite{rw_hull} point out that the Basel II Framework\footnote{See footnote 240 on page 261 of \href{https://www.bis.org/publ/bcbs128.pdf}{International Convergence of Capital Measurement and Capital Standards}} already admitted that when computing a counterparty's exposure for CVA (Credit Valuation Adjustment), the RW probability should be used instead of the risk-neutral one, but due to complexities arising from RW modeling, they tolerated using RN forecasting models. RW simulation should respect some properties to allow their optimal use. According to \cite{rw_bgm}, in their work regarding RW modeling of the Brace Gatarek Musiela Model (BGM model, also known as the Libor Market Model, LMM, \cite{BGM_base}), RW models should be arbitrage-free, produce realistic dynamics, and be able to match a predetermined curve. These are all properties that we will ensure are verified in our different RW applications. 


Our work aims to provide researchers and practitioners with a general approach that allows one to shift a model from RN to RW using change of measures results from probability theory. The generic framework we developed suits several diffusion models, including ones with non-additive noise like the CIR model \cite{CIR_base}. Moreover, our framework is not solely focused on the market price of risk and is designed to address a broad spectrum of market indicators, including interest rates and credit spreads. This work provides a tool to observe the changes in the entire term structure of a market indicator described by a diffusion model when values are anticipated for a certain maturity. These anticipations can come from regulatory authorities for region-wide stress tests or from internal forecasts or stress values. Numerical tests are performed, demonstrating the applications of our generic framework on regulatory stress scenarios and macroeconomic forecasting.

Despite significant advancements in incorporating RW measures in financial modeling to better reflect actual market dynamics and investor behavior, the literature is not as extensive as other topics in financial modeling. Most, if not all, of the work in the literature dealing with RW diffusion is focused on interest rates. To our knowledge, this work is the first to provide a generic framework intended to be applied to many risk indicators (like credit spreads) and to a wide range of models, including those with non-additive noise (like the CIR++ model).  

\cite{rw_g2++} conducted a study with many similarities to the work we will present in Section \ref{sec:RWmodel}, but focused on zero-coupons and the G2++ model (also called the Gauss2++ or Gaussian two-factor model). The problem is therefore quite similar, but takes advantages of the regularity properties of Gaussian models, especially ones with additive noise. Their paper introduces a framework for calibrating the G2++ model under both RN and RW measures. They introduce a time-dependent market price of risk, whereas previous work considered a constant market price of risk (for instance, \cite{ConstantMarketPriceOfRisk}), which is known to be inefficient when forecasting interest rates (see \cite{BadForecast}). The authors also conduct a comparative analysis between step and linear functions to describe the market price of risk. These time-dependent market price of risk functions allow them to achieve more stable long-term interest rate forecasts.

Complementing this approach, \citep{bruti2010rw} also study interest rate term structure models under the RW measure. In this context, they explore the integration of jump-diffusion processes, capturing both continuous and discrete market uncertainties. The particularity of their work is twofold. First, it is in their choice of the growth optimal portfolio as the numeraire for their pricing, and secondly, the fact that they take an interest in pricing under the RW measure. The authors aim to incorporate real-world trends into the pricing of interest rate derivatives. 

From a numerical point of view, \citep{barker2016rw} tackle the simulation aspect of RW measure estimation. They propose and compare two methodologies: a parametric approach using the Esscher transform and a minimum entropy, non-parametric approach for estimating the RW measure within Monte Carlo simulation frameworks. The authors also apply their work to interest rate models, specifically the 3-factor Hull-White model.


This analysis of the state of the art shows that most of the RW modeling has been performed to accommodate interest rate forecasting. This explains why an important part of RW modeling has, as a practical aim, the calibration of the market price of risk. In our attempt to address the limitations of traditional RN models and improve risk management and forecasting, the focus of our work goes beyond the calibration of the market price of risk. We are mainly interested in allowing RW distributions that can take our forecasts of the future into consideration by replicating any arbitrarily given curve for a specific maturity, and allowing one to observe how the whole term structure evolves as a consequence.

In what follows, we will start by deriving in Section \ref{sec:RWmodel} a generic framework to switch from RN to RW. Then we will show how the framework can be applied. 
In Section \ref{sec:AppCIR}, we apply the model to the RW modeling of credit spreads via the CIR++ intensity model \cite{SarrCreditSpreads}. Finally, Section \ref{sec:rwresults} introduces two numerical results. We conduct two Monte Carlo simulations for credit spreads under the RW measure: the first application is for economic forecasting, and the second is for stress testing. In both examples, we demonstrate how the model can be used to exactly fit an arbitrarily given curve.

\section{General framework and main results}
\label{sec:RWmodel}
\subsection{Aim of this Section}
Financial institutions as well as their regulatory authorities are continuously more demanding when it comes to risk assessment and assets and liabilities management. There is a real need of a better understanding of the future behavior of market risk indicators. The problem of modelling the Real-World (RW) evolution of the term structure of key risk indicators is an important one, yet it has not received much interest relatively to Risk-Neutral (RN) models. Also, most of the research conducted focuses on interest rates. Indeed, most term structure models have been developed in order to price interest rate derivatives, (swaptions, caplets, …). In pricing problems, it is generally necessary to use risk-neutral probabilities, as they guarantee arbitrage free prices. However, Real-World probabilities are necessary to answer questions about the real-world distribution of market indicators in the future. Two main reasons can explain that:

\begin{itemize}
    \item When forecasting, RN models do not take into account the fact that investors demand a risk premium for risky assets. This is well explained in \citep{rw_lmm_neg}. 
    
    \item RN model do not allow one to make \textit{what-if} scenarios where we observe the deformation of the whole term structure of a risk indicators under a constraint on one or more maturities.
\end{itemize}

In this section, we introduce a generic framework, adapted to many diffusion models, to consistently switch from RW to RN. The features we want our RW model to exhibit are driven by the characteristics identified by \cite{rw_bgm}, where the author states that a set of desirable properties for a RW model to verify is:

\begin{enumerate}
    \item Being arbitrage-free; 
    \item Producing realistic term structures;
    \item Being able to match any arbitrarily given curve.
\end{enumerate}

The feature that will receive most of our attention in this Section and the following is the third, as it is the hardest to guarantee while being of high interest in risk management.

The task of expressing a chosen indicator under the RW measure can be very tedious, especially with non-additive noise. This is why, in the theoretical framework below, we start by eliminating non-additive noise through a Lamperti transformation. Our approach is based on expressing the transformed indicator under the RW measure as a function of (i) the expression of the transformed indicator under RN and (ii) a parametric function that needs to be calibrated. Once these relations are found, we can revert the transform to express the indicator under RW as a function of (i) the indicator under RN, (ii) a parametric function, and (iii) other terms that might arise from reverting the transformations. Once this is obtained, through numerical simulations involving the calibration of the parametric function and the simulations of the indicator under the RN measure, we can simulate the chosen indicator under RW, fitting any given value. Furthermore, as we will see in the applications, especially in Section \ref{sec:AppCIR}, transitioning from a model defined in RN to a model in RW will be sufficient to justify the first property. The second property requires numerical simulations, which will be performed in Section \ref{sec:rwresults}. The aforementioned parametric function will be calibrated to allow the real-world diffusion of the indicator to fit any given values for the indicator. 

The indicator we want to fit to given values is very rarely the \textit{initial quantity} that is diffused by the model being considered. For instance, when working with interest rate models, the model's \textit{initial quantity} is the instantaneous rate. However, the quantity that practitioners and researchers are mostly interested in will often be zero-coupon bond rates, zero-coupon bond prices, or swaption prices. In the application we will see in Section \ref{sec:AppCIR}, the CIR++ intensity model describes the diffusion of the default intensity, but our indicators of interest will be: credit spreads cumulative hazard rates.

In the remainder of this section, we outline a general approach to establishing the relation between the \textit{initial quantity} (the one diffused by the considered model) under RW as a function of the three aforementioned quantities (the indicator under RN, a parametric function, and other terms). Then, in the application in Section \ref{sec:AppCIR}, we will consider a practical example using the CIR++ intensity model. We will move from the relations involving the \textit{initial quantity} (default intensity) (see Theorem \ref{theo:rwrn} and equation \eqref{eq:lamstar}) to the relations involving the real indicators of interest (credit spreads and cumulative hazard rates) (see Theorems \ref{theo:spspstar} and \ref{theo:linkLam}).

\subsection{Theoretical framework}
The risk-neutral model is set in a filtered probability space $\left(\Omega, \mathcal{F}, (\mathcal{F}_t)_{t \in[0,T]}, \mathbb{Q} \right)$. Here, $(\mathcal{F}_t)_{t \in[0,T]}$ is a given filtration, and $\mathbb{Q}$ defines the risk-neutral probability measure. Let $(W_t)_{t \in[0,T]}$ be a $\mathcal{F}_t$-Brownian motion under $\mathbb{Q}$. The probability space affiliated with the real-world measure will be denoted by $\mathbb{P}$. Now, let us consider in RN, the process $(Y_t)_{t \in[0,T]}$, which is the solution to
\begin{equation}
\label{eq:eq1}
dY_t = b(Y_t) dt + \sigma(Y_t) dW_t,
\quad Y_0 = y\in \mathbb R,
\end{equation}
where  the functions $b: \mathbb{R} \rightarrow \mathbb{R}$ and $\sigma: \mathbb{R} \rightarrow \mathbb{R}$ are locally Lipschitz continuous with $\frac{1}{\sigma(.)}$ being locally integrable. For $\phi(y) = \int_{y_0}^y \frac{1}{\sigma(x)} dx$, if $\sigma \in \mathcal{C}^1$, then by the Lamperti transform, $X_t = \phi(Y_t)$ satisfies the stochastic differential equation
\begin{equation*}
dX_t = L(X_t) dt + dW_t, \quad
X_0 = \phi(y), 
\end{equation*}
$L(x) = \left(\frac{b}{\sigma} - \frac{\sigma'}{2}\right)(\phi^{-1}(x))$. In the sequel,  while still keeping the relation $X_t = \phi(Y_t)$, we consider the same, more generic framework  as in \cite{Alf2013,BdeKeb24}, and consider
 $(X_t)_{t\ge0}$ the solution to the SDE with additive noise defined on $I=(c, +\infty)$, $c \in [-\infty,+\infty[$:
\begin{align}
\label{eq:lam}
    dX_t&=L(X_t)dt+  \zeta dW_t, \quad t\ge 0,\;X_0=x\in I, \zeta \in \mathbb R,
\end{align}
where the drift coefficient $L$ is assumed to meet the following monotonicity condition:
\begin{equation}
\label{eq:CondEq3}
L: I\longrightarrow \mathbb{R}~~is~~C^2,~~\mbox{such that} ~~\exists~K>0,~~ \forall x,x'\in I, x\leq x', L(x')-L(x)\leq K(x'-x).
\end{equation}

Additionally, for some $d\in I$, we assume that 
\begin{equation}
\label{hyp:H1}
v(x)=\displaystyle \int_d^x \int_d^y \exp\Big(-\frac{2}{\zeta^2}\displaystyle \int_z^y L(\xi) d\xi \Big) dz dy\;\text{ satisfies } \lim\limits_{x \to c^+} v(x) =+\infty.\tag{H1}
\end{equation}

According to Feller's test (see e.g., \cite{karatzas1991stochastic}), the conditions \eqref{eq:CondEq3} and \eqref{hyp:H1} guarantee that the SDE \eqref{eq:lam} has a unique strong solution $(Y_t)_{t \ge 0}$ on $I$, which never reaches the boundaries $c$ and $+\infty$. Under this framework, we introduce the following theorem:

\begin{theorem}
    \label{theo:rwrn}
    Let $(X^*_t)_{0\le t\le T}$ be a process that under $\left ( \Omega, \mathcal{F}, \mathbb{Q} \right )$ is a solution to 
    \begin{equation}
        \label{eq:xstarunderq}
        dX_t^* = \left[- \vartheta [(X_u^*-X_u)  - \alpha_u] + L(X_u) \right] + \zeta dW_t.
    \end{equation}

    Let $(\varphi_t)_{t\in[0,T]}$ be an adapted process defined by
    \begin{equation}
    \label{eq:choice}
        \varphi_t:= \frac{1}{\zeta} \left [-(L(X_t^*)-L(X_t)) - \vartheta(X^*_t-X_t -\alpha_t) \right ],
    \end{equation} 
    
    with, $\vartheta \in \mathbb{R}$ and $\alpha_t$ is a deterministic function.
    
    If     
    \begin{equation}
        \label{eq:Novikov}
        \mathbb E^{\mathbb{Q}}\left[\exp\left(\frac 12 \int_0^T \varphi_u^2du\right)\right]<\infty, \tag{H2}
    \end{equation}

    then $\exists \; \mathbb{P}^{\varphi}$, a new probability measure with the following density 
    \begin{equation*}
        \frac{d\mathbb P^{\varphi}}{d\mathbb Q}\,_{|\mathcal F_T}=\exp\left(-\int_0^T\varphi_udW_u-\frac12 \int_0^T \varphi_u^2du\right), 
    \end{equation*} such that 
    \begin{equation}
    \label{eq:Girs}
        W_t^* =W_t + \int_0^t\varphi_udu, 
    \end{equation}
    is a Brownian motion under $\mathbb{P}^{\varphi}$, under this measure, the process $(X_t)^*$ is a solution of 
    \begin{align}
    \label{eq:RW}
        dX^*_t&=L(X^*_t)dt+  \zeta dW^*_t, \quad t\ge 0,\;X^*_0=x\in I,
    \end{align}
    and verifies the following relation with $(X_t)$
    \begin{equation}\label{RWRN}
    X_t^* = X_t + \vartheta \int^t_s\alpha_u e^{-\vartheta (t-u)} du, \mbox{ for all }  0\le s<t\le T. 
 \end{equation}
\end{theorem}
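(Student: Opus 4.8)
The plan is to exploit the fact that the process $(\varphi_t)$ in \eqref{eq:choice} has been reverse-engineered precisely so that the Girsanov drift correction transforms the $\mathbb{Q}$-drift of $(X^*_t)$ in \eqref{eq:xstarunderq} into $L(X^*_t)$. I would therefore proceed in three stages: (i) invoke Girsanov's theorem to legitimize the measure change and identify $W^*$ as a $\mathbb{P}^{\varphi}$-Brownian motion; (ii) substitute the Girsanov relation \eqref{eq:Girs} into \eqref{eq:xstarunderq} and verify, by a direct algebraic cancellation, that $(X^*_t)$ solves \eqref{eq:RW} under $\mathbb{P}^{\varphi}$; and (iii) recover \eqref{RWRN} by subtracting the two dynamics under $\mathbb{Q}$ and solving the resulting linear ordinary differential equation for the difference.

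For stage (i), I would first note that $(\varphi_t)$ is adapted and path-continuous, being built from the $C^2$ function $L$ composed with the continuous semimartingales $X^*$ and $X$ together with an affine map; hence it is progressively measurable and $\int_0^t \varphi_u\, dW_u$ is well defined. Assumption \eqref{eq:Novikov} is exactly Novikov's criterion, so the exponential local martingale $\exp\!\left(-\int_0^t \varphi_u\, dW_u - \tfrac12\int_0^t \varphi_u^2\, du\right)$ is a genuine $\mathbb{Q}$-martingale. Its terminal value, being positive with $\mathbb{Q}$-expectation one, is a bona fide probability density, defining $\mathbb{P}^{\varphi}\sim\mathbb{Q}$ on $\mathcal{F}_T$, and Girsanov's theorem then gives that $W^*_t = W_t + \int_0^t \varphi_u\, du$ is a $\mathbb{P}^{\varphi}$-Brownian motion.

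For stage (ii), I would write $dW_t = dW^*_t - \varphi_t\, dt$ and insert it into \eqref{eq:xstarunderq}, producing a drift of the form $\left[-\vartheta(X^*_t - X_t - \alpha_t) + L(X_t)\right] - \zeta\varphi_t$ with martingale part $\zeta\, dW^*_t$. Substituting $\zeta\varphi_t = -(L(X^*_t) - L(X_t)) - \vartheta(X^*_t - X_t - \alpha_t)$ from \eqref{eq:choice}, the two $\vartheta(X^*_t - X_t - \alpha_t)$ contributions cancel and $L(X_t) + (L(X^*_t) - L(X_t))$ collapses to $L(X^*_t)$, which establishes \eqref{eq:RW}. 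This is the step where the tailored form of $\varphi_t$ does all the work, so I expect the only care required here is faithful sign-tracking through the cancellation.

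For stage (iii), I set $D_t := X^*_t - X_t$ and subtract \eqref{eq:lam} from \eqref{eq:xstarunderq} under $\mathbb{Q}$; the $\zeta\, dW_t$ terms and the $L(X_t)$ terms cancel, leaving the pathwise linear ODE $dD_t = -\vartheta(D_t - \alpha_t)\, dt$. Multiplying by the integrating factor $e^{\vartheta t}$ and integrating from $s$ to $t$ gives $D_t = D_s e^{-\vartheta(t-s)} + \vartheta\int_s^t \alpha_u e^{-\vartheta(t-u)}\, du$, and the stated identity \eqref{RWRN} follows once $D_s = 0$, which holds at $s = 0$ by the common initial condition $X^*_0 = X_0 = x$. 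I expect the main obstacle to be precisely this boundary issue: the left side of \eqref{RWRN} is independent of $s$ while the right side is not, so the relation can hold verbatim only for reference times $s$ at which the two processes coincide (in particular $s = 0$), and I would make that hypothesis explicit rather than leave it implicit. Finally, since $\mathbb{P}^{\varphi}$ and $\mathbb{Q}$ are equivalent, the almost-sure pathwise identity derived under $\mathbb{Q}$ transfers unchanged to $\mathbb{P}^{\varphi}$.
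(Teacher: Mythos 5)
Your proposal is correct and follows essentially the same route as the paper's proof: Girsanov's theorem under Novikov's condition \eqref{eq:Novikov}, the algebraic cancellation via the tailored $\varphi_t$ of \eqref{eq:choice} to obtain \eqref{eq:RW}, and the linear ODE for $X^*_t - X_t$ integrated with the factor $e^{\vartheta t}$ (the paper performs the subtraction under $\mathbb{P}^{\varphi}$ and needs the definition of $\varphi_t$ to cancel the $L$-terms, whereas you subtract directly under $\mathbb{Q}$ where the cancellation is immediate --- the resulting pathwise ODE is identical). Your remark about the boundary term is warranted: the paper's proof silently discards $e^{-\vartheta(t-s)}(X^*_s - X_s)$ when integrating, so the relation \eqref{RWRN}, stated ``for all $0\le s<t\le T$'', is only valid under the convention that the reference time $s$ is one at which the two processes coincide (in particular $s=0$, by the common initial condition $X^*_0 = X_0 = x$), a hypothesis you rightly make explicit.
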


\begin{proof}
    Condition \eqref{eq:Novikov} is Novikov's condition, which allows the application of Girsanov's theorem. This directly provides the first two results: the density of $\mathbb{P}^{\varphi}$ and the existence and expression of the $\mathbb{P}^{\varphi}$-Brownian motion $(W^*_t)_{t \in [0,T]}$.
    
    Then we combine equations \eqref{eq:choice} and \eqref{eq:xstarunderq} to write the dynamic of $(X^*_t)_{t\in[0,T]}$ under $\mathbb{Q}$ as a function of $\varphi_t$, which is:
    \begin{equation*}
        dX_t^* = [L(X_t^*)+ \zeta \varphi_t]dt + \zeta dW_t.
    \end{equation*}
    Plugging \eqref{eq:Girs} in the above equation easily gives the dynamic of $(X^*_t)_{t\in[0,T]}$ under $\mathbb{P}^{\varphi}$, $dX^*_t=L(X^*_t)dt+  \zeta dW^*_t$.
    
    We can then express $(X_t)_{t\in[0,T]}$ under $\mathbb{P}^{\varphi}$ which is $dX_t = [L(X_t) - \zeta \varphi_t]dt + \zeta dW^*_t$. We hence have: 
    \begin{equation*}
        \begin{aligned}
            dX^*_t &= L(X^*_t)dt + \zeta dW^*_t \\
            dX_t &= [L(X_t) - \zeta \varphi_t]dt + \zeta dW^*_t,
        \end{aligned}
    \end{equation*}
    and we can write:
    \begin{equation*}
        \begin{aligned}
            d(X_t^*-X_t) &= \left[ L(X_t^*) - L(X_t) + \zeta_u \right]dt \\
            \text{ which implies } d(X_t^*-X_t) &= -\vartheta (X_t^* - X_t)dt + \vartheta \alpha_t dt \\
            \text{ which gives } e^{-\vartheta t}d[e^{\vartheta t}(X_t^*-X_t)] &= \vartheta \alpha_t dt, \\
        \end{aligned}
    \end{equation*}

    Hence the final result of the theorem:
    $$
        X_t^* = X_t + \vartheta \int^t_s\alpha_u e^{-\vartheta (t-u)} du.
    $$
\end{proof}

The probability measure $\mathbb{P}^{\varphi}$ is a real-world probability, and $(X^*_t)_{t \in [0,T]}$ is a RW process. In what follows, any process marked with an asterisk, $[.]^*$, denotes a RW process (that might be expressed under the risk-neutral or the real-world probability measure). The variable $\vartheta$ could be chosen to simplify the calculations as much as possible (see Section \ref{sec:AppCIR} for a practical illustration of the choice of $\vartheta$).

This leads to the following theoretical result.
\begin{corollary}
    Under the conditions of theorem \ref{theo:rwrn}, we have
    \begin{equation}
        \label{eq:linear}
        Y_t^* = \phi^{-1} \left ( \phi(Y_t) + \vartheta \int^t_s\alpha_u e^{-\vartheta (t-u)} du \right ).
    \end{equation} 
\end{corollary}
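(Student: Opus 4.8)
The plan is to treat this corollary as a direct consequence of Theorem~\ref{theo:rwrn} combined with the reversibility of the Lamperti transform. The only genuine content is that the real-world indicator $Y_t^*$ is, by construction, the inverse-Lamperti image of the real-world process $X_t^*$, exactly as $Y_t = \phi^{-1}(X_t)$ holds under the risk-neutral measure; once this is granted, the identity \eqref{eq:linear} is pure substitution.

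First I would record that $\phi$ is invertible. Since $\sigma \in \mathcal{C}^1$ and $1/\sigma$ is locally integrable, $\sigma$ does not vanish on the relevant domain, so $\phi' = 1/\sigma$ has constant sign and $\phi$ is a strictly monotone $\mathcal{C}^1$ map onto its image; hence $\phi^{-1}$ is well defined and continuous. This is precisely the regularity already assumed when introducing the Lamperti transform, so no new hypotheses are required.

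Next I would identify $Y_t^*$. Under $\mathbb{P}^{\varphi}$ the process $X_t^*$ solves the additive-noise SDE \eqref{eq:RW}, which has the same form as the equation \eqref{eq:lam} satisfied by $X_t$ under $\mathbb{Q}$. Reverting the change of variables $X = \phi(Y)$ therefore yields the real-world indicator $Y_t^* := \phi^{-1}(X_t^*)$, the analogue under $\mathbb{P}^{\varphi}$ of the risk-neutral relation $Y_t = \phi^{-1}(X_t)$. With this definition in place I would invoke the relation \eqref{RWRN} of Theorem~\ref{theo:rwrn},
\[
X_t^* = X_t + \vartheta \int_s^t \alpha_u e^{-\vartheta(t-u)}\,du,
\]
substitute $X_t = \phi(Y_t)$, and apply $\phi^{-1}$ to both sides to reach the claimed formula.

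The step I expect to require the most care is conceptual rather than computational: justifying that reverting the Lamperti transform commutes with the change of measure, i.e.\ that the object obtained by applying $\phi^{-1}$ to the real-world process $X_t^*$ is indeed the natural real-world counterpart $Y_t^*$ of $Y_t$. This rests on the fact that $\phi$ is a deterministic, state-dependent transformation, insensitive to which equivalent measure the dynamics are expressed under, together with the invertibility established above; the remaining algebra is then immediate.
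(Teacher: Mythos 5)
Your proof is correct and takes essentially the same route as the paper, which states this corollary without an explicit proof as an immediate consequence of relation \eqref{RWRN} in Theorem~\ref{theo:rwrn} combined with the Lamperti identities $X_t=\phi(Y_t)$ and $Y_t^*=\phi^{-1}(X_t^*)$. Your added justification of the invertibility of $\phi$ (noting that a $\mathcal{C}^1$ diffusion coefficient with locally integrable reciprocal cannot vanish, so $\phi$ is strictly monotone) is a sound elaboration of the same argument rather than a different approach.
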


From a practical point of view, we require $Y_t^*$ in the following form:
\begin{equation}
    \label{eq:linear2}
    Y_t^* = Y_t + R(\phi(Y_t), \vartheta, \alpha_t, t),
\end{equation}
for some function $R(\dots)$ that can be as complex as $\phi$ requires it to be, see for example, section below. 

At this stage, we have, as we intended, expressed the indicator under RW as a function of the indicator under RN and other terms. Depending on the type of function $\phi$, this expression will be used to calibrate the parametric function $\alpha_t$. We introduce in the next sections one example with a square root function arising in CIR-type models.



\section{Application to CIR++ intensity model for credit spreads}
\label{sec:AppCIR}

\subsection{Credit Spreads modeling under the Risk-Neutral measure}
To apply the generic framework we introduced in Section \ref{sec:RWmodel} to modeling credit spreads, we use the model and the framework developed in \cite{SarrCreditSpreads}. As stated in Section \ref{sec:SOA}, to our knowledge, this is the first work that takes an interest in the RW diffusion of credit spreads. 

The risk-neutral model is set in a filtered probability space $\left(\Omega, \mathcal{F}, (\mathcal{F}_t)_{t \in[0,T]}, \mathbb{Q} \right)$. Here, $(\mathcal{F}_t)_{t \in[0,T]}$ is the natural filtration of the default-less market, and $Q(.)$ is the probability function. Let $r(t) = r_t$ be the instantaneous risk-free rate of the market and $\lambda(t)$ the default intensity. The default intensity process $(\lambda_t)_{t \in[0,T]}$ and the interest rate process $(r_t)_{t \in[0,T]}$ are $\mathcal{F}_t$-measurable. Also, let the default intensity (or hazard rate) $\lambda(t)$ satisfy:
\begin{equation}
    \label{eq:di}
    \lambda(t) dt = Q\left(\tau \in [t, t + dt] \mid \tau > t, \mathcal{F}_t\right),
\end{equation}
and be represented by the following dynamics\footnote{For convenience, in what follows, we identically write $y_t$ or $y(t)$ as well as $\lambda(t)$ or $\lambda_t$.}:
\begin{equation}
    \label{eq:cirrnp}
    \begin{cases}
        \lambda(t) = y(t) + \psi(t), \\
        dy(t) = \kappa\left(\theta - y_t\right) dt + \sigma \sqrt{y_t} dW_t,
    \end{cases}
\end{equation}

where $(W_t)_{t \in [0,T]}$ is a $\mathbb{Q}$-Brownian motion, and $\kappa, \theta, \sigma > 0$. The Feller condition, $2\kappa \theta \geq \sigma^2$, and $y(0) = y_0 > 0$ guarantee $y_t > 0$ under $\mathbb{Q}$. $\psi(t)$ is a deterministic function. Under this framework, the credit spread of maturity $T$ at time $t$ is given by:
\begin{equation}
\label{eq:spfinal}
    \begin{dcases}
        \operatorname{Sp}(t, T) =-\frac{1}{T-t} \ln \left[\delta+(1-\delta) \frac{S^m(0,T)}{S^m(0,t)}\frac{A(0,t)}{A(0,T)}\frac{e^{-B(0,t)y_0}}{e^{-B(0,T)y_0}}A(t,T)e^{-B(t,T)[\lambda(t)-\psi(t)]} \right]\\
        \psi(t) = \lambda^m(t)+D(t)-y_0E(t),
    \end{dcases}
\end{equation}
where, 
\begin{equation}
    \label{eq:AandB}
    \begin{aligned}
        A(t,T) &= \left( \frac{2he^{\frac{1}{2}(\kappa+h)(T-t)}}{2h+(\kappa+h)\left( e^{(T-t)h} - 1 \right )} \right)^{\frac{2\kappa\theta}{\sigma^2}}\text {, }
        B(t,T) = \frac{2\left( e^{(T-t)h} - 1 \right )}{2h+(\kappa+h)\left( e^{(T-t)h} - 1 \right )}\\
        h &= \sqrt{\kappa^2+2\sigma^2},\\
        D(t) &= \frac{d}{dt}ln(A(0,t))\text{, }E(t) = \frac{d}{dt}B(0,t), 
    \end{aligned}
\end{equation}
and $\delta \in [0,1]$ is the recovery rate. An extensive proof and all the related elements can be found in \cite{SarrCreditSpreads}.

Let us now derive our model's shift from RN to RW. The probability space affiliated with the real-world will again be denoted for brevity by $\mathbb{P}$, instead of $\mathbb{P}^{\varphi}$. We introduce $(W_t^*)_{t \in [0,T]}$, a Brownian motion under the RW probability measure $\mathbb{P}$. We will apply the results from Theorem \ref{theo:rwrn} and then justify how we obtain the three features mentioned earlier from our model.
\begin{equation}
    \label{eq:cir}
    \begin{cases}
        & \lambda(t) = y(t) + \psi(t)  \\
        & dy(t) = \kappa\left(\theta-y_t\right)dt + \sigma\sqrt{y_t}dW_t,
    \end{cases} 
\end{equation}
with $2\kappa \theta \geq \sigma^2$ and $y(0) = y_0>0$. $\psi(t)$ is the deterministic fitting function (for the survival probabilities).

Similarly in RW (real-world), under the measure $\mathbb{P}$, there exists a CIR++ diffusion written:
\begin{equation}
    \label{eq:cirrw}
    \begin{cases}
        & \lambda^*(t) = y^*(t) + \psi(t)  \\
        & dy^*(t) = \kappa\left(\theta-y^*_t\right)dt + \sigma\sqrt{y^*_t}dW^*_t,
    \end{cases} 
\end{equation}
with  $y^*(0) = y^*_0>0$.

Using a Lamperti transform, we re-write for $x_t = \sqrt{y_t}$: 

\begin{equation}
    \label{eq:lamp}
    dx_t = \frac{1}{2}\left[\left(\kappa\theta - \frac{1}{4}\sigma^2\right)\frac{1}{x_t} - \kappa x_t \right]dt + \frac{1}{2}\sigma dW_t.
\end{equation}

Therefore regarding the notations in Theorem \ref{theo:rwrn}, we make the following identifications:
\begin{itemize}
    \item $L(x) = \frac{1}{2}\left[\left(\kappa\theta - \frac{1}{4}\sigma^2\right)\frac{1}{x} - \kappa x \right]$; 
    \item $\zeta = \frac{1}{2}\sigma$;
    \item $\phi(x)=\sqrt{x}.$
\end{itemize}

To apply Theorem \ref{theo:rwrn}, we need to very condition \eqref{eq:Novikov}.

\subsection{Condition \ref{eq:Novikov}: Novikov criterion}\mbox{}\\
Per condition \eqref{eq:Novikov}, we need to verify that our function $\varphi_t$ respects Novikov's criterion. Equation \eqref{eq:choice} indicates that the function $\varphi_t$ is given by:
\begin{equation*}
    \varphi_t= \frac{1}{\zeta} \left [-(L(x_t^*)-L(x_t)) - \vartheta(x^*_t-x_t -\alpha_t) \right ],
\end{equation*} 
which, for the identifications we made earlier corresponds to:
$$
    \varphi_t = \frac{2}{\sigma} \left[ - \frac{1}{2} \left( \kappa \theta - \frac{1}{4} \sigma^2 \right) \left( \frac{1}{x_t^*} - \frac{1}{x_t} \right) + \frac{1}{2} \kappa (x_t^* - x_t) - \vartheta (x_t^* - x_t - \alpha_t) \right].
$$

We stated earlier that the parameter $\vartheta \in \mathbb{R}$ should be chosen to simplify as much as possible the calculations. A good choice, seems to be $\vartheta = \frac{1}{2}\kappa$ as it gives:
\begin{equation*}
    \label{eq:phiparticulier}
    \varphi_t = -\frac{1}{\sigma} \left[ \left( \kappa \theta - \frac{1}{4} \sigma^2 \right) \left( \frac{1}{x_t^*} - \frac{1}{x_t} \right) - \kappa \alpha_t \right],
\end{equation*}

$\alpha_u$ is the deterministic function that will shift our RN diffusion drift to allow us to fit RW values. The function $\alpha_u$ could be for instance step-wise constant. 
We now need to show that our function $\varphi_t$ respects Novikov's condition, that is $\mathbb{E}^{\mathbb{Q}}\left[e^{\frac{1}{2}\int_0^T\varphi_u^2du}\right]<\infty$. 

Let $\gamma=\frac{1}{\sigma}\left(\kappa \theta - \frac{1}{4} \sigma^2\right)$. We need to show that $\mathbb{E}^{\mathbb{Q}}\left[e^{\frac{1}{2}\int_0^T\left[\gamma\left(\frac{1}{x_u^*}-\frac{1}{x_u}\right)\right]^2 du}\right]<\infty$.

\begin{equation*}
     \mathbb{E}^{\mathbb{Q}}\left[\operatorname{exp}\left({\frac{1}{2}\int_0^T\left(\gamma\left(\frac{1}{x_u^*}-\frac{1}{x_u}\right)\right)^2 du}\right)\right] \leq  \mathbb{E}^{\mathbb{Q}}\left[\operatorname{exp}\left({\frac{\gamma^2}{2}\int_0^T\left(\frac{2}{(x_u^*)^2}+\frac{2}{(x_u)^2}\right) du}\right)\right].
\end{equation*}

Remark that the dynamic of $x_t$ under $\mathbb{P}$ is:
\begin{equation*}
    dx_t = \frac{1}{2}\left[\left(\kappa\theta - \frac{1}{4}\sigma^2\right)\frac{1}{x_t} - \kappa x_t -\sigma \varphi_t \right]dt + \frac{1}{2}\sigma dW_t^*.
\end{equation*}
Meanwhile, the dynamic of $x_t^*$ under $\mathbb{P}$ is: 
\begin{equation*}
    dx_t^* = \frac{1}{2}\left[\left(\kappa\theta - \frac{1}{4}\sigma^2\right)\frac{1}{x_t^*} - \kappa x_t^* \right]dt + \frac{1}{2}\sigma dW_t^*.
\end{equation*}

From these two expressions, the comparison theorem gives us that if $-\sigma \varphi_t < 0$, then $x_t^* > x_t$. The existence, expression, and sign of $\varphi_t$ will be justified later on. Therefore, for a suitable (regarding Girsanov) function $\varphi_t > 0$, we can write:
\begin{equation*}
    \begin{aligned}
         \mathbb{E}^{\mathbb{Q}}\left[\operatorname{exp}\left({\frac{1}{2}\int_0^T\left(\gamma\left(\frac{1}{x_u}-\frac{1}{x_u^*}\right)\right)^2 du}\right)\right]&\leq \mathbb{E}^{\mathbb{Q}}\left[\operatorname{exp}\left({{2\gamma^2}\int_0^T\frac{1}{(x_u)^2} du}\right)\right].\\
    \end{aligned}    
\end{equation*}

Since $(x_u)^2=y_u$, in practical terms, the problem consists of defining the domain of convergence $\mathcal{D}_t$ of the characteristic function of the integral of a solution of \eqref{eq:cir}.

\begin{theorem}
    \label{prop}
    Let $\hat{y}_t$ be the solution of the CIR equation defined in \eqref{eq:cir}, the characteristic function of $\int_0^T\hat{y}_tdt$ is well-defined on  $D_t$ defined by:
    
    \begin{equation*}
        D_t = \left\{\mu \in \mathbb{R}, \; \mathbb{E}\left[{\exp}\left(\mu\int_0^T\frac{1}{\hat{y}_u} du\right)\right]<\infty\right\} = \left\{ \mu, \mu > -\frac{1}{2\sigma^2} \left(\kappa \theta - \frac{1}{2}\sigma^2 \right)^2 \right\}.
    \end{equation*}
\end{theorem}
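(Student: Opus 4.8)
The plan is to treat the quantity that actually defines $D_t$, namely the exponential moment $\mathbb{E}\big[\exp(\mu\int_0^T \hat y_u^{-1}\,du)\big]$, and to locate its critical $\mu$ from the behaviour of the diffusion near the origin. First I would reuse the Lamperti transform \eqref{eq:lamp}: with $x_t=\sqrt{\hat y_t}$ one has $\int_0^T \hat y_u^{-1}\,du=\int_0^T x_u^{-2}\,du$, and after the linear rescaling $\widetilde x_t=\tfrac{2}{\sigma}x_t$ the process solves $d\widetilde x_t=\big[(\nu+\tfrac12)\widetilde x_t^{-1}-\tfrac{\kappa}{2}\widetilde x_t\big]dt+dW_t$ with index $\nu=\tfrac{2\kappa\theta-\sigma^2}{\sigma^2}$ and dimension $\delta=4\kappa\theta/\sigma^2$. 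The Feller condition $2\kappa\theta\ge\sigma^2$ is exactly $\nu\ge0$ (i.e. $\delta\ge2$), so $\widetilde x$ never hits $0$ and the integral is a.s. finite. Since $x_u^{-2}=\tfrac{4}{\sigma^2}\widetilde x_u^{-2}$, the object to study is $\mathbb{E}\big[\exp(c\int_0^T \widetilde x_u^{-2}\,du)\big]$ with the single reduced constant $c:=4\mu/\sigma^2$.

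Next I would strip the linear drift $-\tfrac{\kappa}{2}\widetilde x$ by a Girsanov change of measure relative to the law $\mathbb{P}^{(\nu)}$ of a genuine Bessel process $R$ of index $\nu$. Since $\kappa>0$, an Itô computation shows the density is the exponential of $-\tfrac{\kappa}{4}\widetilde x_T^{2}-\tfrac{\kappa^2}{8}\int_0^T\widetilde x_u^{2}\,du$ plus deterministic constants, hence bounded above; this transfers finiteness from the Bessel side to our process without moving the threshold. I would then reduce to $\mathbb{E}^{(\nu)}\big[\exp(c\int_0^T R_u^{-2}\,du)\big]$ and invoke the classical absolute continuity between Bessel laws of different indices,
\[
\frac{d\mathbb{P}^{(\mu')}}{d\mathbb{P}^{(\nu)}}\Big|_{\mathcal F_T}=\Big(\frac{R_T}{R_0}\Big)^{\mu'-\nu}\exp\Big(-\tfrac{\mu'^2-\nu^2}{2}\int_0^T R_u^{-2}\,du\Big),
\]
which, on setting $c=\tfrac{\nu^2-\mu'^2}{2}$, rewrites the target as $R_0^{\,\mu'-\nu}\,\mathbb{E}^{(\mu')}\big[R_T^{\,\nu-\mu'}\big]$. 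This last moment is finite for every \emph{real} index $\mu'\ge0$, so the expectation is finite precisely when a real root $\mu'=\sqrt{\nu^2-2c}$ exists, i.e. when $c\le \nu^2/2$, and diverges when $c>\nu^2/2$ because the index becomes imaginary. Translating $c\le\nu^2/2$ back through $c=4\mu/\sigma^2$ and $\nu=(2\kappa\theta-\sigma^2)/\sigma^2$ produces the critical constant $\tfrac{1}{2\sigma^2}\big(\kappa\theta-\tfrac12\sigma^2\big)^2$ appearing in the statement; it is independent of the horizon, as one expects for a singularity governed purely by the local behaviour at $0$.

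As an independent check I would record the analytic derivation of the same constant: the value function $h(t,y)=\mathbb{E}[\exp(\mu\int_t^T \hat y_u^{-1}du)\mid \hat y_t=y]$ satisfies $\partial_t h+\kappa(\theta-y)\partial_y h+\tfrac12\sigma^2 y\,\partial_{yy}h+\tfrac{\mu}{y}h=0$, and the indicial equation for the admissible power law $h\sim y^{p}$ as $y\to0^+$ is $\tfrac{\sigma^2}{2}p^2+(\kappa\theta-\tfrac{\sigma^2}{2})p+\mu=0$. Its discriminant is nonnegative exactly when $\mu\le\tfrac{1}{2\sigma^2}\big(\kappa\theta-\tfrac12\sigma^2\big)^2$, the very same threshold, with complex roots (oscillation, no bounded solution near $0$) signalling the blow-up beyond it.

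The step I expect to be the main obstacle is the rigorous control of the boundary contribution in both directions. The bounded Girsanov density gives finiteness transfer on the good side immediately, but the \emph{divergent} side requires a genuine lower bound: I would restrict to the event that the path stays below a level $M$, on which the density is bounded below by a positive constant, and use that the Bessel blow-up is driven by excursions toward $0$ (compatible with staying bounded above), so the restricted Bessel expectation is already $+\infty$. Establishing this sharpness—that the expectation is genuinely infinite once the index turns complex—I would argue from the small-$y$ asymptotics of the Bessel transition density (its $I_\nu$ factor), since the absolute-continuity identity by itself only certifies finiteness on the admissible side. Everything away from the origin is harmless, because there both the drift and the $1/y$ potential are smooth and integrable, so the entire difficulty is localised at $y=0$.
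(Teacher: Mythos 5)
Your route is genuinely different from the paper's. The paper's proof is essentially a citation-plus-continuation argument: it quotes the closed-form Laplace transform \eqref{eq:charc} of $\int_0^T du/y_u$ from \cite{BAK} (a confluent hypergeometric expression), observes that the right-hand side is analytic exactly where the index $\nu$ is real, and invokes analyticity of the Laplace transform together with the identity theorem to identify the domain of finiteness. You instead argue pathwise: Lamperti and rescaling to a radial Ornstein--Uhlenbeck process, a Girsanov step whose density $\exp\left(-\tfrac{\kappa}{4}\widetilde x_T^2+\text{const}-\tfrac{\kappa^2}{8}\int_0^T\widetilde x_u^2\,du\right)$ is bounded above, and then the classical index-change absolute continuity for Bessel laws, giving $\mathbb{E}^{(\nu)}\left[\exp\left(c\int_0^T R_u^{-2}du\right)\right]=R_0^{\mu'-\nu}\,\mathbb{E}^{(\mu')}\left[R_T^{\nu-\mu'}\right]$ for $\mu'=\sqrt{\nu^2-2c}$. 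Those reductions (the index $\nu=(2\kappa\theta-\sigma^2)/\sigma^2$, the constant $c=4\mu/\sigma^2$, the finiteness of the Bessel moment for every real $\mu'\ge 0$) are all correct, and this is more self-contained than the paper's argument since it does not lean on an external closed formula.

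There is, however, a substantive problem: your argument does not prove the statement as printed --- it proves its mirror image. Your threshold is an \emph{upper} bound on $\mu$: finiteness for $c\le\nu^2/2$, i.e.\ for $\mu\le +\frac{1}{2\sigma^2}\left(\kappa\theta-\frac12\sigma^2\right)^2$, so the domain you obtain is a left half-line $(-\infty,K]$ with $K>0$; the set in the statement is the right half-line $\{\mu>-K\}$. You cannot say your constant is ``the critical constant appearing in the statement'': it appears there with the opposite sign and opposite inequality. Your version is in fact the structurally correct one. Since $\int_0^T du/\hat y_u\ge 0$, every $\mu\le 0$ trivially gives a finite expectation, so the true domain must contain all of $\mathbb{R}_-$, which the printed set does not; and a Laplace transform of a nonnegative functional can only be finite on a left half-line, never on a right half-line. (The paper's own proof notes the domain ``obviously contains at least $\mathbb{R}_-$'' and then concludes a set that fails to contain it; the slip most plausibly comes from transcribing the formula of \cite{BAK}, which is stated for $\mathbb{E}\left[e^{-\lambda\int_0^T du/y_u}\right]$, while keeping $+2\mu\sigma^2$ under the square root.) Your indicial-equation check, whose discriminant condition is again $\mu\le K$, confirms this. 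So rather than asserting agreement, your write-up should state and prove the corrected set; as a by-product it flags an error in the theorem (which also matters downstream, where the paper uses the stated set to verify Novikov's condition \eqref{eq:Novikov}).

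Separately, the divergence half ($\mu$ strictly above the threshold implies an infinite expectation) is still open in your sketch, as you yourself flag. The absolute-continuity identity certifies finiteness only when a real index $\mu'$ exists; ``the index becomes imaginary'' is not yet a proof of blow-up. Your localization idea (paths staying below a level $M$, density bounded below there) is sound as far as it goes, but the remaining claim --- that the Bessel expectation restricted to that event is already infinite --- needs a lower bound on the tail of the additive functional $\int_0^T R_u^{-2}du$, i.e.\ a Hartman--Watson-type estimate, not merely small-$r$ asymptotics of the marginal transition density. Two clean ways to close it: invoke the known identity that $\mathbb{E}^{(\nu)}\left[\exp\left(c\int_0^T R_u^{-2}du\right)\,\middle|\,R_T\right]$ is a ratio of modified Bessel functions of orders $\sqrt{\nu^2-2c}$ and $\nu$, infinite once $\nu^2-2c<0$; or use Landau's theorem that the abscissa of convergence of the Laplace transform of a nonnegative random variable is a singular point of its analytic continuation, combined with the square-root branch point of the closed form at $\mu=K$ --- the latter being the rigorous version of what the paper's identity-theorem argument is attempting, and it too only works once the signs are put right.
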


\begin{proof}
    The expression of the Laplace transform is provided in \cite{BAK}.
    
    \begin{equation}
    \label{eq:charc}
    \begin{aligned}
    \mathbb{E}\left(e^{\mu \int_0^T \frac{d u}{y_u}}\right)= & \frac{\Gamma\left(c+\frac{\nu}{2}+\frac{1}{2}\right)}{\Gamma(\nu+1)} \frac{1}{(y_0)^c \delta^c} \beta^{\frac{\nu}{2}+\frac{1}{2}} \\
    & \times \exp \left(\frac{b}{2 \sigma}\left[\kappa\theta T-\frac{2 y_0}{e^{\kappa T}-1}\right]\right){ }_1 F_1\left(c+\frac{\nu}{2}+\frac{1}{2}, \nu+1, \beta\right),
    \end{aligned}
    \end{equation}
    
    where $\beta=\frac{2\kappa y_0}{\sigma^2\left(e^{\kappa T}-1\right)}, \; c=\frac{\kappa\theta}{\sigma^2}, \; \delta=\frac{2\kappa e^{\kappa T}}{\sigma^2\left(e^{\kappa T}-1\right)}$ and $\nu=\frac{2}{\sigma^2} \sqrt{(\kappa\theta-\frac{1}{2}\sigma^2)^2+2 \mu \sigma^2}$. 
    
    Following a similar approach to the proof of Proposition 1.2.4 in \cite{ALFONSI}, we give the set of convergence of $D_t$. $\left\{\mu \in \mathbb{R}, \mathbb{E}\left[{\exp}\left(\mu\int_0^T\frac{1}{y_u} du\right)\right]<\infty\right\}$ obviously contains at least $\mathbb{R_{-}}$. Also, the characteristic function is analytic on the inside of its set of convergence (e.g. \cite{Widder}). Then $\Gamma$ is analytic on $\mathbb{R} \backslash \{-k, k \in \mathbb{N}^* \}$ (see for example \cite{Handbook}). From \cite{Handbook} we also get that $1/\Gamma$ is entire and therefore analytic in $\mathbb{R} \backslash \{-k, k \in \mathbb{N}^* \}$. We bring to the attention of the reader that even if $\kappa, \; \theta \in \mathbb{R}$, $c>0$ since $2\kappa\theta>\sigma^2$. Then, we get from chapter 6 of \cite{transcen} that the ${ }_1F_1$ function is analytic.   
    Thus, the right hand side of \eqref{eq:charc} is analytic as soon as $\nu$ is well defined in $\mathbb{R}$, which requires $\mu > -\frac{1}{2\sigma^2} \left(\kappa \theta - \frac{1}{2}\sigma^2 \right)^2$ as stated in theorem \ref{prop}.   
    
    The right hand side is analytic in $\mathbb{R} \cap \left\{ \mu, \mu > -\frac{1}{2\sigma^2} \left(\kappa \theta - \frac{1}{2}\sigma^2 \right)^2 \right\}$. We know from the identity theorem in complex analysis (see the first chapter of \cite{identity} for more details) that since the left hand side and the right hand side are both analytic and coincide in $\mathbb{R_{-}} \cap \left\{ \mu, \mu > -\frac{1}{2\sigma^2} \left(\kappa \theta - \frac{1}{2}\sigma^2 \right)^2 \right\}$, they must coincide in $\mathbb{R} \cap \left\{ \mu, \mu > -\frac{1}{2\sigma^2} \left(\kappa \theta - \frac{1}{2}\sigma^2 \right)^2 \right\}$ as well. Hence
    $\left\{\mu \in \mathbb{R}, \mathbb{E}\left[\rm{exp}\left(\mu\int_0^T\frac{1}{y_u} du\right)\right]<\infty\right\} = \left\{ \mu, \mu > -\frac{1}{2\sigma^2} \left(\kappa \theta - \frac{1}{2}\sigma^2 \right)^2 \right\}$ and the expression of $\mathbb{E}\left[\operatorname{exp}\left( \mu \int_0^T\frac{1}{y_u} du\right)\right]$ is given by equation \eqref{eq:charc}. 
\end{proof}

Theorem \ref{prop} justifies that $\varphi_t = -\frac{1}{\sigma}\left(\kappa\theta + \frac{1}{4}\sigma^2\right)\left(\frac{1}{x_t^*}-\frac{1}{x_t}\right)+\kappa\frac{1}{\sigma}\alpha_t$ verifies Novikov's condition. Therefore, we can apply Theorem \ref{theo:rwrn} and write:
\begin{equation}
    \label{eq:xstar}
    x_t^*= x_t + \frac{1}{2} \kappa\int_s^t\alpha_u{e^{-\frac{1}{2} \kappa(t-u)}du}. 
\end{equation}

\subsection{Application to credit spreads and cumulative hazard rate}
We now need to find the relation between the credit spreads term structure in RN and RW. In what follows, we note $f(t)= \frac{1}{2} \kappa\int_s^t\alpha_u{e^{-\frac{1}{2} \kappa(t-u)}du}$. We can naturally draw the relation between the real-world and risk-neutral default intensities. We recall that $x_t=\sqrt{y_t}$ and $\lambda_t=y_t+\psi_t$
\begin{equation}
    \label{eq:lamstar}
    \lambda_t^*= \lambda_t + f_t^2+2f_t\sqrt{y_t}.
\end{equation}

We have so far described and justified, via $\sqrt{y_t}$ (which is $x_t$), a link between $\lambda_t$ and $\lambda_t^*$. Theorem \ref{theo:spspstar} gives the relation between credit spreads in RN and RW.

\begin{theorem} 
    \label{theo:spspstar}
    The real-world term structure of credit spreads, $\operatorname{Sp}^*(t,T)$, defined in the space $\left(\Omega, \mathcal{F}, (\mathcal{F}_t)_{t \in[0,T]}, \mathbb{P} \right)$, is related to the risk-neutral term structure of credit spreads, $\operatorname{Sp}(t,T)$, defined by equation \eqref{eq:spfinal} in the space $\left(\Omega, \mathcal{F}, (\mathcal{F}_t)_{t \in[0,T]}, \mathbb{Q} \right)$, by the following equation:
    \begin{equation}
        \label{eq:spspstar}
            e^{-(T-t)\operatorname{Sp}^*(t,T)} = e^{-B(t,T)F(t,\sqrt{y_t})}\left[ -\delta \left( 1 - e^{B(t,T)F(t,\sqrt{y_t})} \right) + e^{ -(T-t)\operatorname{Sp}(t,T)}\right], 
    \end{equation}
    where $F(t, \sqrt{y_t}):=f_t^2+2f_t\sqrt{y_t}$. 
\end{theorem}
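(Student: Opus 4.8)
The plan is to exploit the fact that the risk-neutral and real-world credit spreads are produced by the \emph{same} pricing formula, the only difference being the stochastic argument $y_t$ versus $y_t^*$ that enters the exponential. First I would start from \eqref{eq:spfinal}, put it in exponentiated form, and use $\lambda(t)-\psi(t)=y(t)$ to write
\[
e^{-(T-t)\operatorname{Sp}(t,T)} = \delta + (1-\delta)\,K(t,T)\,e^{-B(t,T)y_t},
\]
where $K(t,T):=\frac{S^m(0,T)}{S^m(0,t)}\frac{A(0,t)}{A(0,T)}\frac{e^{-B(0,t)y_0}}{e^{-B(0,T)y_0}}A(t,T)$ collects every deterministic time-$0$ prefactor.

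Next I would argue that, since under $\mathbb{P}$ the process $y^*$ solves the CIR++ equation \eqref{eq:cirrw} with the identical parameters $\kappa,\theta,\sigma$, the same fitting function $\psi$, and the same initial datum (indeed $y_0^*=y_0$ follows from \eqref{eq:xstar} taken at $s=0$), the derivation of \cite{SarrCreditSpreads} transfers verbatim to the real-world measure. Consequently the functions $A,B,\psi,S^m$ and hence the prefactor $K(t,T)$ are unchanged, and the real-world spread obeys the same formula with $y_t$ replaced by $y_t^*=\lambda^*(t)-\psi(t)$:
\[
e^{-(T-t)\operatorname{Sp}^*(t,T)} = \delta + (1-\delta)\,K(t,T)\,e^{-B(t,T)y_t^*}.
\]

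Then I would substitute the measure-change relation. From \eqref{eq:xstar} one has $\sqrt{y_t^*}=\sqrt{y_t}+f_t$, so $y_t^*=y_t+f_t^2+2f_t\sqrt{y_t}=y_t+F(t,\sqrt{y_t})$, which is exactly \eqref{eq:lamstar} rewritten in terms of $y$. Factoring the exponential gives $e^{-B(t,T)y_t^*}=e^{-B(t,T)y_t}\,e^{-B(t,T)F(t,\sqrt{y_t})}$. Finally I would read off $(1-\delta)K(t,T)e^{-B(t,T)y_t}=e^{-(T-t)\operatorname{Sp}(t,T)}-\delta$ from the risk-neutral identity and insert it, obtaining
\[
e^{-(T-t)\operatorname{Sp}^*(t,T)} = \delta + \left(e^{-(T-t)\operatorname{Sp}(t,T)}-\delta\right)e^{-B(t,T)F(t,\sqrt{y_t})},
\]
after which a one-line rearrangement (collecting the $\delta$ terms and factoring $e^{-B(t,T)F}$) puts the expression into the claimed form.

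The computation is therefore almost entirely algebraic, and the step I would treat most carefully is not an inequality or an estimate but the justification that $K(t,T)$ and the functions $A,B,\psi,S^m$ are genuinely identical under both measures, so that the two spread formulas differ \emph{only} through the exponent $-B(t,T)y_t$ versus $-B(t,T)y_t^*$. This is the sole piece of real content and rests on $y^*$ being a CIR++ process with the same coefficients and initial condition, guaranteed by Theorem \ref{theo:rwrn} together with the model specification \eqref{eq:cirrw}; everything after it is bookkeeping.
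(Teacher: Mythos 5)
Your proposal is correct and follows essentially the same route as the paper's proof: write both spreads in exponentiated form with the common deterministic prefactor $M(t,T)$ (your $K(t,T)$), use $\sqrt{y_t^*}=\sqrt{y_t}+f_t$ to obtain $y_t^*=y_t+F(t,\sqrt{y_t})$, factor the exponential, and substitute the risk-neutral identity. The only difference is presentational: where the paper dismisses the transfer of formula \eqref{eq:spfinal} to the real-world measure with the phrase ``by construction,'' you spell out the justification (same CIR++ coefficients, same $\psi$, and $y_0^*=y_0$ from \eqref{eq:xstar} at $s=0$), which is a welcome but not substantively different addition.
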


\begin{proof}
    We set in this proof for notation simplicity $M(t, T)=\frac{S^m(0,T)}{S^m(0,t)}\frac{A(0,t)}{A(0,T)}\frac{e^{-B(0,t)y_0}}{e^{-B(0,T)y_0}}A(t,T)$ and $F(t, x_t)=f_t^2+2f_tx_t$. We also have shown $x_t^*=x_t+f_t$ with $f_t=\frac{1}{2} \kappa\int_s^t\alpha_u{e^{-\frac{1}{2} \kappa(t-u)}du}$.
    Since, by construction, equation \eqref{eq:spfinal} can be similarly written for the credit spreads under RW, we have:
  \begin{equation*}
        \begin{aligned}
             \operatorname{Sp}^*(t, T) &=-\frac{1}{T-t} \ln \left[\delta+(1-\delta) M(t,T)e^{-B(t,T)y_t^*} \right]\\
             y_t &= x_t^2 \text{ this implies } y_t^*=y_t+f_t^2+2f_tx_t\\
             \text{ this gives } e^{-(T-t)Sp^*(t,T)} &= \delta + (1-\delta)M(t,T)e^{-B(t,T)y_t}e^{-B(t,T)(f_t^2+2f_tx_t)}\\
             &=e^{-B(t,T)F(t,x_t)}\left[ \delta e^{B(t,T)F(t,x_t)} + \delta - \delta + (1-\delta)M(t,T)e^{-B(t,T)y_t} \right]\\
             &=e^{-B(t,T)F(t,x_t)}\left[ -\delta \left( 1 - e^{B(t,T)F(t,x_t)} \right) + \underbrace{\left(\delta+(1-\delta) M(t,T)e^{-B(t,T)y_t}\right)}_{\operatorname{exp}\left( -(T-t)Sp(t,T) \right)} \right]\\
             e^{-(T-t)Sp^*(t,T)} &= e^{-B(t,T)F(t,x_t)}\left[ -\delta \left( 1 - e^{B(t,T)F(t,x_t)} \right) + e^{ -(T-t)Sp(t,T)}\right].
        \end{aligned}
    \end{equation*}
\end{proof}

So far, we can justify two of the features that we wanted our model to hold. 
\begin{itemize}
    \item \textbf{Arbitrage-free}: The easiest way to justify the arbitrage-free property of this RW model is to use Theorem 2.7 of \citep{arbitragefree}. Since we start from a risk-neutral measure, we ensure that the model is arbitrage-free. This argument is used, for instance, by \citep{rw_bgm}.
    
    \item \textbf{Realistic Term structures}: The natural way to justify that our model produces a realistic term structure of credit spreads is to observe the said curves. This is what Figures \ref{fig:RW_2} and \ref{fig:RW_4} show. We can, however, have an intuitive hint that this property is verified. Let us consider equation \eqref{eq:spspstar}. When considering the trivial case where $\alpha(u)$ is constant at 0, we indeed have the equivalence between $\operatorname{Sp}(.,.)$ and $\operatorname{Sp}^*(.,.)$. Therefore, the curves under RW are realistic as soon as RN curves are, which has already been ensured in \cite{SarrCreditSpreads}. Then, when $(t,T)$ is fixed, $e^{-(T-t)\operatorname{Sp}^*(t,T)}$ is a linearly shifted $e^{-(T-t)\operatorname{Sp}(t,T)}$. These comments on the realistic curve property are also intended to help having a better understanding and interpret equation \eqref{eq:spspstar}.
\end{itemize}

Further developments are, however, needed for the third property, which is the ability to fit any arbitrarily chosen curve. We have given, through equation \eqref{eq:spspstar}, the link between the credit spread at time $t$ and maturity $T$ under RN and RW probability measures. The model we consider fits the initial market survival probabilities or, identically, the initial market cumulative hazard rate curve. The cumulative hazard rate at time $t$ and for the time horizon $T$ is related to the survival probability, $S(t,T)$, by the following equation:
\begin{equation}
    \label{eq:cumhr}
    S(t,T)=e^{-\Lambda(t, T)}.
\end{equation}

We also recall that \cite{SarrCreditSpreads} gives the expression of survival probabilities in our CIR++ intensity context:
\begin{equation}
    \label{eq:survivprob}
    S(t,T)=\frac{S^m(0,T)}{S^m(0,t)}\frac{A(0,t)}{A(0,T)}\frac{e^{-B(0,t)y_0}}{e^{-B(0,T)y_0}}A(t,T)e^{-B(t,T)[\lambda(t)-\psi(t)]}.
\end{equation} 

In what follows, we derive the model to be able to fit any arbitrarily given market cumulative hazard rate curve. This feature is arguably the most convenient and of highest practical use. It will allow researchers practitioners to observe the full credit spread curve deformation as one maturity is set to any arbitrarily chosen value. This provides an efficient way to conduct regulatory or internal stress scenarios. It also provides analysts with a tool to better evaluate the behavior (with respect to credit spreads) of any product they are willing to acquire or sell, and therefore have a better understanding of its forthcoming valuation's variation.

To achieve this feature, we calibrate the function $\alpha(t)$ to ensure that the expectation of $\Lambda^*(t,T)$ in the real-world space corresponds to cumulative hazard rate anticipation / stress values / forecasts, $\vec{c}$. In other terms, we are willing to obtain an equation to solve for $\alpha(u)$ that would guarantee for a chosen maturity $\widetilde{T}$ and for a set of N dates $\{t_i\}_{i\in \llbracket1, N\rrbracket}$ , $\mathbb{E}^{\mathbb{P}}\Lambda^*(t_i, \widetilde{T})=c_i,\; \forall i\in \llbracket1, N\rrbracket$. Theorem \ref{theo:linkLam} gives the relation between the cumulative hazard rate under RN and RW.

\begin{theorem}
    \label{theo:linkLam}
    The real-world cumulative hazard rate, $\Lambda^*(t,T)$, defined in the space $\left(\Omega, \mathcal{F}, (\mathcal{F}_t)_{t \in[0,T]}, \mathbb{P} \right)$, is related to the risk-neutral cumulative hazard rate, $\Lambda(t,T)$, defined by equations \eqref{eq:cumhr} and \eqref{eq:survivprob} in the space $\left(\Omega, \mathcal{F}, (\mathcal{F}_t)_{t \in[0,T]}, \mathbb{Q} \right)$, by the following linear equation:
    \begin{equation}
        \label{eq:LamLamstar}
        \Lambda^*(t,T) = \Lambda(t,T) + B(t,T)F(t, \sqrt{y_t}).
    \end{equation} 
\end{theorem}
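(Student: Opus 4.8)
The plan is to work directly with survival probabilities rather than routing through the credit-spread relation of Theorem \ref{theo:spspstar}, since the recovery rate $\delta$ makes the spread relation nonlinear whereas the hazard-rate relation should come out linear. The starting point is equation \eqref{eq:cumhr}, which gives $\Lambda(t,T) = -\ln S(t,T)$ and, in the real-world space, $\Lambda^*(t,T) = -\ln S^*(t,T)$. It therefore suffices to compute the ratio $S^*(t,T)/S(t,T)$ and take its logarithm.

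First I would write the real-world survival probability. By construction — exactly as in the proof of Theorem \ref{theo:spspstar}, where the spread formula is re-derived under $\mathbb{P}$ — the expression \eqref{eq:survivprob} holds verbatim under the RW measure with $\lambda(t)$ replaced by $\lambda^*(t)$; all the deterministic pieces (the quotient of initial survival probabilities and $A$-factors, the function $\psi$, and $B(t,T)$) are unchanged. Using $\lambda(t)-\psi(t)=y_t$ and $\lambda^*(t)-\psi(t)=y^*_t$, every factor cancels in the ratio except the stochastic exponential, leaving
$$\frac{S^*(t,T)}{S(t,T)} = e^{-B(t,T)(y^*_t - y_t)}.$$

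Next I would substitute the intensity relation. From $x^*_t = x_t + f_t$ (equation \eqref{eq:xstar}) together with $y_t = x_t^2$, one obtains $y^*_t - y_t = f_t^2 + 2 f_t \sqrt{y_t} = F(t,\sqrt{y_t})$; this is precisely equation \eqref{eq:lamstar}, since $\lambda^*_t - \lambda_t = y^*_t - y_t$. Plugging this in and taking $-\ln$ of both sides yields $\Lambda^*(t,T) - \Lambda(t,T) = B(t,T)\,F(t,\sqrt{y_t})$, which is the claimed identity \eqref{eq:LamLamstar}.

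There is no serious obstacle here: the argument is essentially bookkeeping once Theorem \ref{theo:rwrn} and relation \eqref{eq:lamstar} are in hand. The only points requiring care — and which have already been settled earlier — are (i) justifying that the survival-probability formula transfers to $\mathbb{P}$ with the same deterministic coefficients, and (ii) the well-definedness of $y_t$ under $\mathbb{P}$, in particular its positivity, which was handled in the Novikov/comparison-theorem discussion and guarantees that $\sqrt{y_t}$ and hence $F(t,\sqrt{y_t})$ make sense. The real payoff of the computation is the \emph{linearity} of the resulting relation, which is exactly the structural feature that later makes $\alpha(t)$ tractable to calibrate against target hazard-rate curves $\vec{c}$.
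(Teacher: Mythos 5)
Your proposal is correct and follows essentially the same route as the paper: both write the RW survival probability via equation \eqref{eq:survivprob} with $\lambda^*$ in place of $\lambda$, use $y^*_t=(x_t+f_t)^2=y_t+F(t,\sqrt{y_t})$ from equation \eqref{eq:xstar}, and take logarithms so that the deterministic factors drop out, leaving the linear shift $B(t,T)F(t,\sqrt{y_t})$. Your presentation via the ratio $S^*(t,T)/S(t,T)$ is only a cosmetic reordering (log of a quotient versus difference of logs) of the paper's computation.
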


\begin{proof}
Equation \eqref{eq:survivprob} similarly describes the survival probability under RW:
\begin{equation*}
    S^*(t,T)=\frac{S^m(0,T)}{S^m(0,t)}\frac{A(0,t)}{A(0,T)}\frac{e^{-B(0,t)y_0}}{e^{-B(0,T)y_0}}A(t,T)e^{-B(t,T)[\lambda^*(t)-\psi(t)]},
\end{equation*}
and we still have the relation $-\operatorname{ln}\left[S^*(t,T)\right]=\Lambda^*(t,T)$.
These two equations lead to: 

\begin{equation*}
    -\operatorname{ln}\left[S^*(t,T)\right]=\Lambda^*(t,T)=-\operatorname{ln}\left[\frac{S^m(0,T)}{S^m(0,t)}\frac{A(0,t)}{A(0,T)}\frac{e^{-B(0,t)y_0}}{e^{-B(0,T)y_0}}A(t,T)\right]+B(t,T)[\lambda^*_t-\psi_t].
\end{equation*}
Since $\lambda^*(t)-\psi(t)=y^*(t)$ and $y^*_t=(x^*_t)^2=(x_t+f_t)^2$, we finally write:
\begin{equation*}
    \begin{aligned}
        \Lambda^*(t,T)&=\underbrace{-\operatorname{ln}\left[\frac{S^m(0,T)}{S^m(0,t)}\frac{A(0,t)}{A(0,T)}\frac{e^{-B(0,t)y_0}}{e^{-B(0,T)y_0}}A(t,T)\right]+B(t,T)y_t}_{\Lambda(t,T)}+ \\ 
        & B(t,T) \underbrace{[f_t^2+2f_tx_t]}_{F(t,x_t)}.\\
        \text{ From which we finally have } \Lambda^*(t,T) &= \Lambda(t,T) + B(t,T)F(t, \sqrt{y_t}).
    \end{aligned}
\end{equation*}
\end{proof}

Taking the expectations under the RW probability measure $\mathbb{P}$ on equation \eqref{eq:LamLamstar}, it is straightforward to write: 
\begin{equation*}
    \label{eq:LamLamstarExpec}
   \mathbb{E}^{\mathbb{P}}\Lambda^*(t,T) = \mathbb{E}^{\mathbb{P}}\Lambda(t,T) + B(t,T)[f_t^2+2f_t\mathbb{E}^{\mathbb{P}}\sqrt{y_t}],
\end{equation*}

Then we can derive the equation to solve $\forall t_i \in \{t_i\}_{i \in \llbracket1,N \rrbracket}$ for maturity $\widetilde{T}$:

\begin{equation*}
       \mathbb{E}^{\mathbb{P}}\Lambda^*(t_i,\widetilde{T}) = \mathbb{E}^{\mathbb{P}}\Lambda(t_i,\widetilde{T}) + B(t_i,\widetilde{T})\left[ f^2(t_i)+2f(t_i)\mathbb{E}^{\mathbb{P}}\sqrt{y(t_i)} \right],
\end{equation*}

which implies 
\begin{equation}
    \label{eq:tosolve}
    \left[c_i - \mathbb{E}^{\mathbb{P}}\Lambda(t_i,\widetilde{T})\right]\frac{1}{B(t_i,\widetilde{T})}=f^2(t_i)+2f(t_i)\mathbb{E}^{\mathbb{P}}\sqrt{y(t_i)},
\end{equation}
with $f(t)=\frac{1}{2} \kappa\int_s^t\alpha_u{e^{-\frac{1}{2} \kappa(t-u)}du}$.

We are solving \eqref{eq:tosolve} for $\alpha(t)$. We can choose many types of functions for $\alpha(t)$. \citep{rw_g2++} analyze the impact of choosing a constant, step, or linear function. In this paper, we choose step functions as they are an easy way to obtain time-varying parameters. Other possibilities will not be discussed here. Therefore, $\alpha(u)$ is a step function that will have exactly $N$ steps $\alpha^i$, $i \in \llbracket 1, N \rrbracket$ (the function will remain constant at its last value after the last constraint value).
$$
    \alpha(t)=\sum_{i=1}^{N-1}\mathbb{1}_{t \in [t_i,t_{i+1}]} \alpha^i + \alpha^N\mathbb{1}_{t>t_N}.
$$

It comes the following expression for $f(t)$:

\begin{equation}
    \label{eq:ft}
    f(t) = 
    \begin{dcases}
        e^{-\frac{1}{2}\kappa t} \sum_{i=1}^m\alpha^i\left( e^{\frac{1}{2}\kappa t_{i+1}} - e^{\frac{1}{2}\kappa t_{i}} \right) \text{ if } t\leq t_N \text{ and $m$ s.t. } t \in [0, t_m] \\
        e^{-\frac{1}{2}\kappa t} \sum_{i=1}^{N-1}\alpha^i\left( e^{\frac{1}{2}\kappa t_{i+1}} - e^{\frac{1}{2}\kappa t_{i}} \right) + e^{-\frac{1}{2}\kappa t}\alpha^N\left( e^{\frac{1}{2}\kappa t} - e^{\frac{1}{2}\kappa t_{N}} \right) \text{ if } t> t_N.
    \end{dcases}
\end{equation}

The proof is given in appendix \ref{sec:ftproof}.

\paragraph{$\bullet$  Solving for $\alpha(t)$} \mbox{} \\
Let us now explain how each element of equation \eqref{eq:tosolve} is obtained: 

\begin{itemize}
    \item The set of scalars $\{c_i\}_{i \in \llbracket 1, N \rrbracket}$ are the expert-wise chosen targets of the cumulative hazard rate. They can be forecasts, stress values and so on. Each $c_i$ is a target cumulative hazard rate value for the target date $t_i$.

    \item The expectation $\mathbb{E}^{\mathbb{P}}\Lambda(t_i,\widetilde{T})$ is the expectation of the risk-neutral cumulative hazard rate for the horizon $\widetilde{T}$ rate under the the real-world probability. It is obtained for each $t_i$ via Monte-Carlo simulation.

    \item The scalar $B(t_i,\widetilde{T})$ is obtained via equation \eqref{eq:AandB}. 

    \item The scalar $f(t_i)$ is obtained via equation \eqref{eq:ft}.

    \item The expectation $\mathbb{E}^{\mathbb{P}}\sqrt{y(t_i)}$ is also simulated in the RW Monte-Carlo process using $y(t_i)$  that is also required to obtain $\mathbb{E}^{\mathbb{P}}\Lambda(t_i,\widetilde{T})$.
    
\end{itemize}

\paragraph{$\bullet$  Credit spreads anticipation}\mbox{}\\
It is clear that practitioners willing to use a RW credit spreads model will focus on credit spreads rather than cumulative hazard rates. Similarly, regulatory stress tests are concerned with credit spreads and not cumulative hazard rates. In the previous sections, we presented the link between RW and RN credit spreads (equation \eqref{eq:spspstar}), but when it came to calibrating the function $\alpha(t)$ to fulfill the third property (reproducing any arbitrarily given curve), we used the cumulative hazard rate. This was due to two main reasons. First, it made sense from a modeling perspective, as the model was designed to fit the market structure of survival probabilities or, equivalently, cumulative hazard rates at the initial time. The other reason was more practical: the link between survival probabilities and cumulative hazard rates linearized the problem, simplifying the calculations. Fortunately, going back to credit spreads is relatively easy. From equation \eqref{eq:spfinal} describing credit spreads, we have in RW:
\begin{equation*}
    \begin{aligned}
        \operatorname{Sp}^*(t, T) &=-\frac{1}{T-t} \ln \left[\delta+(1-\delta) \frac{S^m(0,T)}{S^m(0,t)}\frac{A(0,t)}{A(0,T)}\frac{e^{-B(0,t)y_0}}{e^{-B(0,T)y_0}}A(t,T)e^{-B(t,T)[\lambda^*(t)-\psi(t)]} \right]\\
        &=-\frac{1}{T-t} \ln \left[\delta+(1-\delta) S^*(t,T) \right]
    \end{aligned}
\end{equation*}
\begin{equation}
    \label{eq:spfuncLam}
    \hspace{-6.75cm} \operatorname{Sp}^*(t, T) =-\frac{1}{T-t} \ln \left[\delta+(1-\delta) e^{-\Lambda^*(t,T)} \right]
\end{equation}

This equation can also be inverted to: 
\begin{equation}
    \label{eq:LamfunSp}
    \Lambda^*(t,T) = -\operatorname{ln} \left[ \frac{1}{1-\delta} \left( e^{-(T-t)\operatorname{Sp}^*(t, T)}-\delta \right) \right]
    \footnote{This equation is not valid for any chosen $\operatorname{Sp}(,.)$ as it requires $\operatorname{Sp}(t,T) < -\frac{1}{T-t}\ln(\delta)$. This condition is also required in \cite{SarrCreditSpreads} to infer the initial survival probabilities from credit spreads. As stated by the authors, this condition is not restrictive. $\delta$ is the recovery rate and is often taken as 40\%, as we did in all our numerical simulations. For a $[t, T]$ gap as wide as 10, this condition allows for credit spreads up to 900 basis points (BPs).}.
\end{equation} 

\paragraph{$\bullet$  Practical use}\mbox{}\\
The few steps below summarize how the model can be used to make simulation under the real-world probability space. We recall that we have justified that the model under RW is arbitrage-free, and Section (\ref{sec:rwresults}), Figures \ref{fig:RW_2} and \ref{fig:RW_4} will show that we produce realistic curves. 

\begin{enumerate}
    \item First, one must choose the $N$ target values $\{c_i\}_{i \in \llbracket 1, N \rrbracket}$ for the cumulative hazard rate at dates  $\{t_i\}_{i \in \llbracket 1, N \rrbracket}$ and for the maturity $\widetilde{T}$:

    \begin{itemize}
        \item They can be chosen directly based on expert judgment.
        \item More likely, they will be obtained through targets from credit spreads and inverted to cumulative hazard rates using equation \eqref{eq:LamfunSp}.
    \end{itemize}

    \item Then we simulate $\Lambda(t,T)$ in RN (like any classic random simulation) using survival probabilities via equation \eqref{eq:survivprob} and $\Lambda(t,T)=-\operatorname{ln}[S(t,T)]$. 

    \item Using equation \eqref{eq:LamLamstar}, we infer $\Lambda^*(t,T)$. Note that this equation also requires the simulation of $y_t$.

    \item Finally, we obtain the RW credit spreads term structure using equation \eqref{eq:spfuncLam}.

\end{enumerate}

\section{Numerical tests: an application to economic forecasts and to stress tests}
\label{sec:rwresults}
Let us now take a look at a very practical use of RW credit spreads modeling. We introduce two real-world configurations to present two ways the model can be used. The first configuration is a forecast scenario, and the second is a stress scenario. 

In all the simulations that follow, we use the same data set and parameters as those used in \cite{SarrCreditSpreads}. In particular, the simulations are based on Credit Agricole's credit spread, survival probability, default intensity, and cumulative hazard rate. The initial time is taken as of January 1, 2024. The differentiation is made between its bond yield and the French Government's bond yield. The parameters are those calibrated in the so-called \textit{global scenario} (see Table \ref{table:resbase2} in Appendix \ref{sec:CIRParams}). Similarly, the diffusion of the cumulative hazard rate under RN follows the same logic as in the referenced paper. We start by diffusing the default intensity using its distribution, and using equation \eqref{eq:survivprob}, we get the cumulative hazard rate. We recall that a solution of the CIR model follows a noncentral chi-square, $\chi^2[2 c y(s) ; 2 q+2,2 w]$, with $2 q+2$ degrees of freedom and parameter of noncentrality $2 w$ (see \citep{cox2005theory}). The density of this distribution is given by:
\begin{equation*}
    f(y(t), t \; ; \; y(s), s)=c e^{-w-v}\left(\frac{v}{w}\right)^{q / 2} I_q\left(2(w v)^{1 / 2}\right), 
\end{equation*}
where $c := \frac{2 \kappa}{\sigma^2\left(1-e^{-\kappa(t-s)}\right)}$, $w := c y(s) e^{-\kappa(t-s)}$, $v := c y(t)$ and $q := \frac{2 \kappa \theta}{\sigma^2}-1$.

and $I_q(\cdot)$ is the modified Bessel function of the first kind of order $q$. To perform the simulations, we have used R’s library sde\footnote{\href{https://CRAN.R-project.org/package=sde}{https://CRAN.R-project.org/package=sde}}. Note that this distribution is conditional to the past values. Therefore, it requires setting a time step and keeping track of previous values. We take a weekly time step.

\paragraph{Goldman Sachs' 2024 Global Credit Outlook}\mbox{}\\
We first use the credit spreads forecasts of Goldman Sachs from their 2024 global credit report.\footnote{\href{https://www.goldmansachs.com/intelligence/pages/gs-research/2024-global-credit-outlook-back-in-the-saddle/report.pdf}{https://www.goldmansachs.com/intelligence/pages/gs-research/2024-global-credit-outlook-back-in-the-saddle/report.pdf}} The report anticipates a slight decrease in the values of credit spreads. The behavior expected for European financial issuers is described in Table \ref{tab:gs_base} below:

\begin{table}[H]
    \centering 
    \begin{tabular}{c|cccc}
        \hline \hline 
         & & & \\ [-0.4em] 
        \textbf{Current (Q4 2023)} & Q1 2024 & Q2 2024 & Q3 2024 & Q4 2024 \\ [0.4em] 
        \hline \hline 
        & & & \\
        201 & 194 & 190 & 187 & 184 \\ 
    \end{tabular}
    \caption{In BP, Goldman Sachs' Euro financial credit spreads forecast}
    \label{tab:gs_base}
\end{table}
These forecasts do not cover individual counterparties, including Crédit Agricole. However, we can easily transform them into relative changes and apply them to the behavior of Crédit Agricole's credit spreads. They are also not differentiated by maturity. We apply these forecasts to the 5-year tenor:
\begin{table}[H]
    \centering 
    \begin{tabular}{c|cccc}
        \hline \hline 
         & & & \\ [-0.4em] 
        \textbf{Current (Q4 2023)} & Q1 2024 & Q2 2024 & Q3 2024 & Q4 2024 \\ [0.4em] 
        \hline \hline 
        & & & \\
        113 & 109 & 107 & 105 & 103 \\ 
    \end{tabular}
    \caption{In basis points, Goldman Sachs' Euro financial credit spreads forecasts, translated to Crédit Agricole's 5-year tenor.}
    \label{table:ca_spstress}
\end{table}
Now, per the previous section, this stress should be transformed into a cumulative hazard rate stress. Applying equation \eqref{eq:LamfunSp}, it gives the following stress:

\begin{table}[H]
    \centering 
    \begin{tabular}{c|cccc}
        \hline \hline 
         & & & \\ [-0.4em] 
        \textbf{Current (Q4 2023)} & Q1 2024 & Q2 2024 & Q3 2024 & Q4 2024 \\ [0.4em] 
        \hline \hline 
        & & & \\
        0.096 & 0.093 & 0.091 & 0.089 & 0.088 \\ 
    \end{tabular}
    \caption{5Y Credit Agricole credit spreads forecasts translated to cumulative hazard rate forecasts}.
    \label{table:ca_hrstress}
\end{table}
For the simulation, we make 20,000 draws. The simulation starts on January 1, 2024, with the actual initial credit spreads. The time step is 1 week. We assume that the first target (Q1 2024) is attained after the first time step (January 8, 2024); then, between two quarters, the credit spread remains constant until the next forecast value, and so on. This means that we have one target $c_i$ for each time step $t_i$, and thus, the number of time steps $N$ is 52. In other terms we have for $I_0 = [01/01/2024, 1^{st}\operatorname{\text{ } week}[$, $I_1 = [1^{st}\operatorname{\text{ } week},\operatorname{Q1 \text{ } 2024}[$, $I_2 = [\operatorname{Q1 \text{ } 2024}, \operatorname{Q2 \text{ } 2024}[$, $I_3 = [\operatorname{Q2 \text{ } 2024}, \operatorname{Q3 \text{ } 2024}]$ and $I_4 = [\operatorname{Q3 \text{ } 2024}, \operatorname{Q4 \text{ } 2024}]$:        
\begin{equation*}
    \begin{aligned}
        \{t_i\}_{i \in \llbracket 1, 52 \rrbracket} &= \llbracket 1^{st}\operatorname{\text{ } week},\operatorname{Q1 \text{ } 2024}\rrbracket \cup \rrbracket\operatorname{Q1 \text{ } 2024}, \operatorname{Q2 \text{ } 2024}4\llbracket \cup \llbracket\operatorname{Q3 \text{ } 2024}, \operatorname{Q4 \text{ } 2024} \rrbracket \\
        c_i &= \begin{cases}
                0.096 \text{ if } t_i \in I_0\\
                0.093 \text{ if } t_i \in I_1\\
                0.091 \text{ if } t_i \in I_2\\
                0.089 \text{ if } t_i \in I_3\\
                0.088 \text{ if } t_i \in I_4\\
            \end{cases}
    \end{aligned}
\end{equation*}

Figures \ref{fig:RW_1} and \ref{fig:RW_2} below respectively show the behavior of the 5-year credit spread during the RW simulation and the behavior of the entire term structure.

 \begin{figure}[H]
    \centering
    \includegraphics[width=16.5cm]{./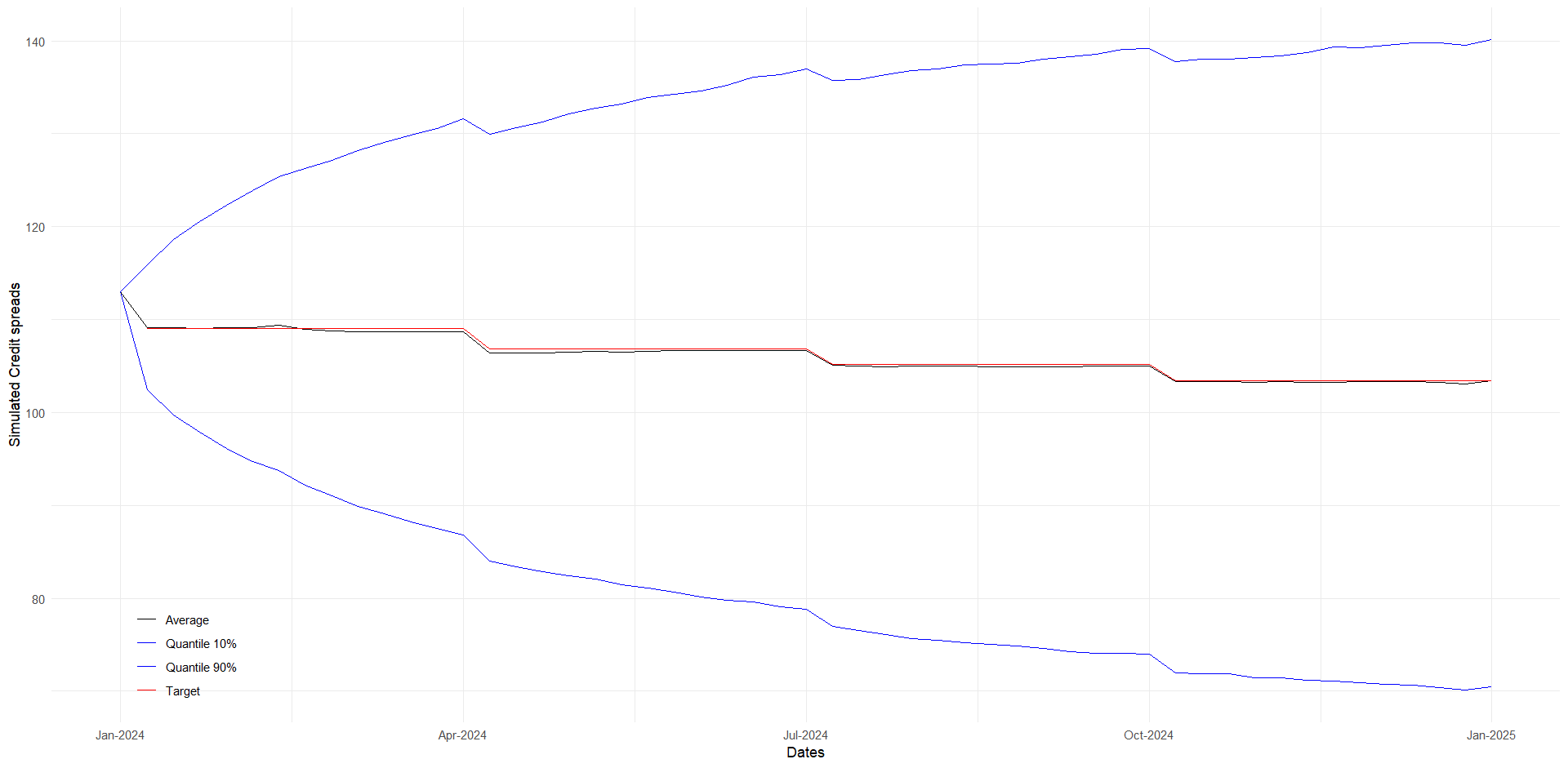}
    \caption{In basis points, 5-year Credit Agricole's credit spreads, simulated values and target values (Goldman Sachs's forecast). We observe the expectation and the 10th and 90th percentiles.}
    \label{fig:RW_1}
\end{figure}

We observe that the expectation of the simulated 5-year credit spreads (black curve) almost exactly matches the target values (red curve), while the quantiles (blue curves) encompass the expectation curve.

\begin{figure}[H]
    \centering
    \includegraphics[width=16.5cm]{./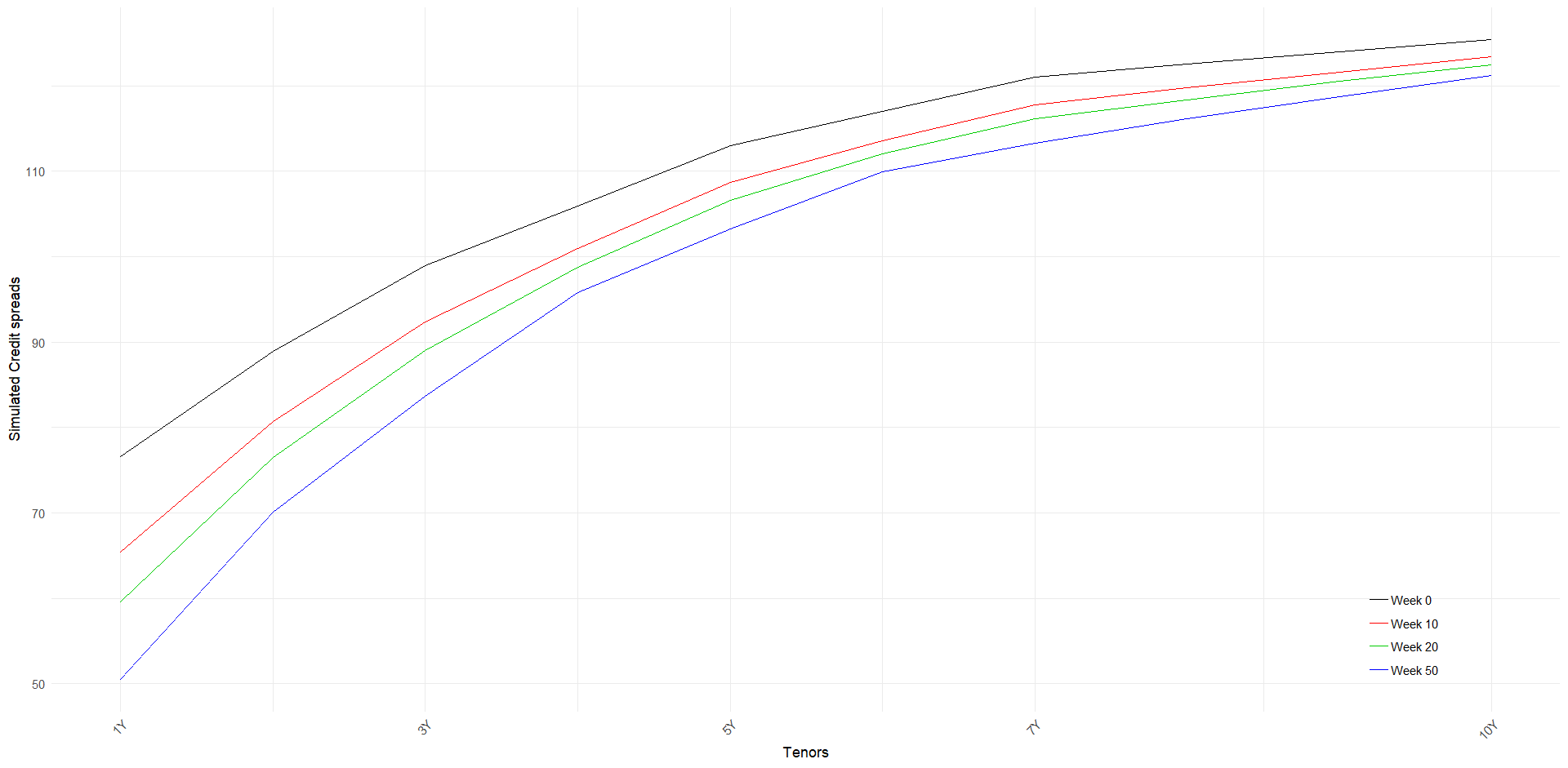}
    \caption{In basis points, the expectation of the term structure of Credit Agricole's credit spreads under the forecast scenario at different dates in the future.}
    \label{fig:RW_2}
\end{figure}

We observe how the entire term structure decreases as a consequence of the RW behavior of the credit spreads. This is indeed the intuitive result one would expect. Another very important observation is that the behavior imposed on the 5-year credit spread still results in very realistic credit spreads across the entire term structure.

\paragraph{EBA's 2023 EU-wide stress test}\mbox{}\\
The second configuration uses the European Banking Authority (EBA)'s regulatory stress test.\footnote{\href{https://www.eba.europa.eu/publications-and-media/press-releases/eba-launches-2023-eu-wide-stress-test}{https://www.eba.europa.eu/publications-and-media/press-releases/eba-launches-2023-eu-wide-stress-test}}. The stress test provides an absolute change to apply to issuers' credit spreads by region and sector, based on credit rating. For French financial counterparties rated between 1 and 2 by ECAIs (External Credit Assessment Institutions), which is the case of Credit Agricole as of January 1, 2024, the stress value is 133 BPs. While the regulator usually requires an instantaneous stress test, we consider here a linear progression of the stress value over a one-year period, for the sake of illustration. We still take the 5-year maturity. The simulation still involves 20,000 draws with a 1-week time step. Therefore, we have $N=52$ and $\forall i \in \llbracket 1, N \rrbracket$:
\begin{equation*}
    \begin{aligned}
        t_i &= i \frac{1}{52} \\
        c_i &= 113 + 133t_i
    \end{aligned}
\end{equation*}

Figures \ref{fig:RW_3} and \ref{fig:RW_4} below, respectively show the behavior of the 5-year credit spread during the RW simulation and the behavior of the entire term structure.

\begin{figure}[H]
    \centering
    \includegraphics[width=16.5cm]{./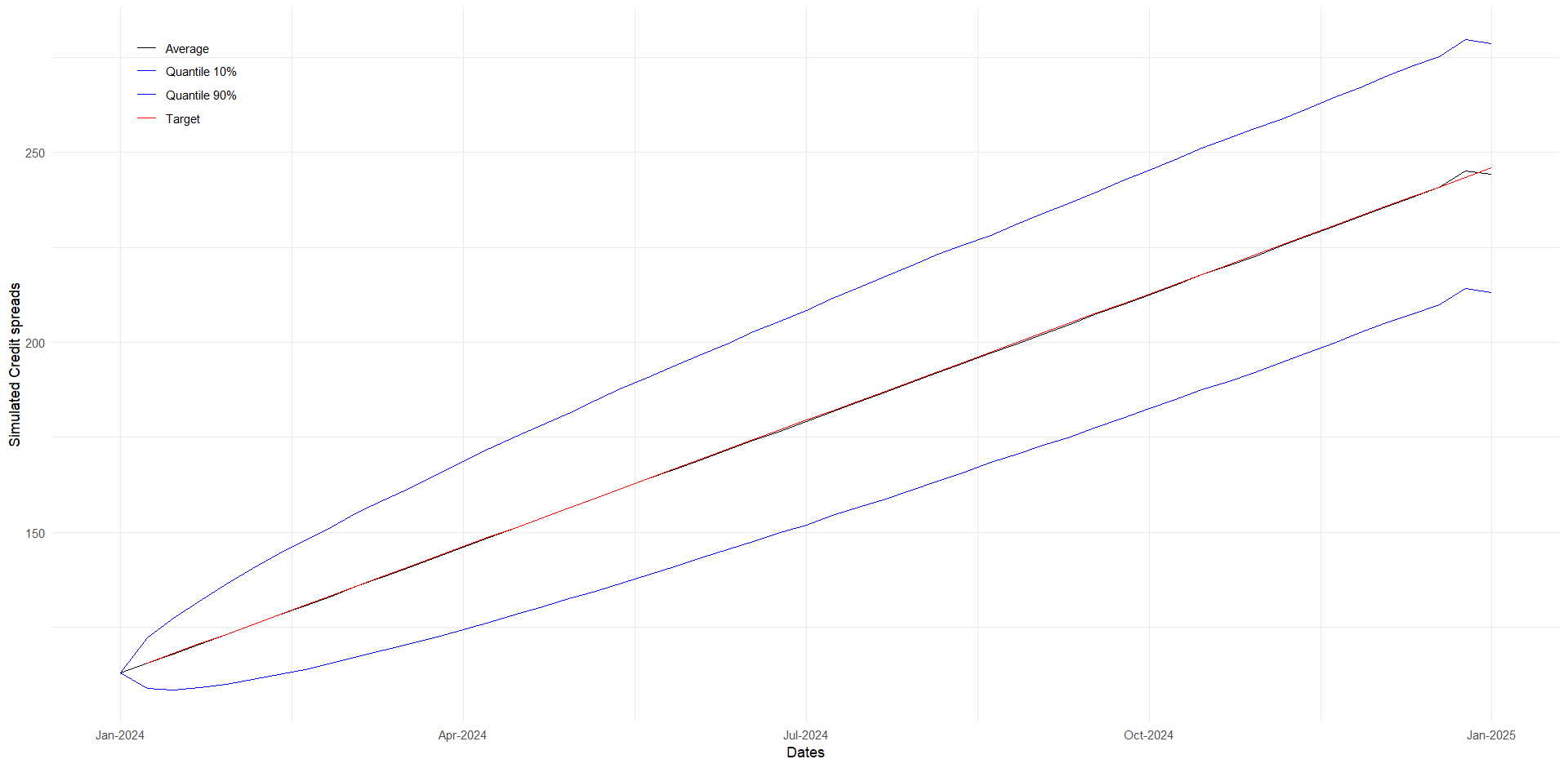}
    \caption{In basis points, 5-year Credit Agricole's credit spreads, simulated values and target values (EBA stressed values). We observe the expectation and the 10th and 90th percentiles.}
    \label{fig:RW_3}
\end{figure}

We observe here that, once again, we fit the target set for the 5-year maturity almost perfectly throughout the simulation. We include in Appendix \ref{sec:RWcumhaz} the graphics showing the behavior of the 5-year horizon cumulative hazard rate in both scenarios.

\begin{figure}[H]
    \centering
    \includegraphics[width=16.5cm]{./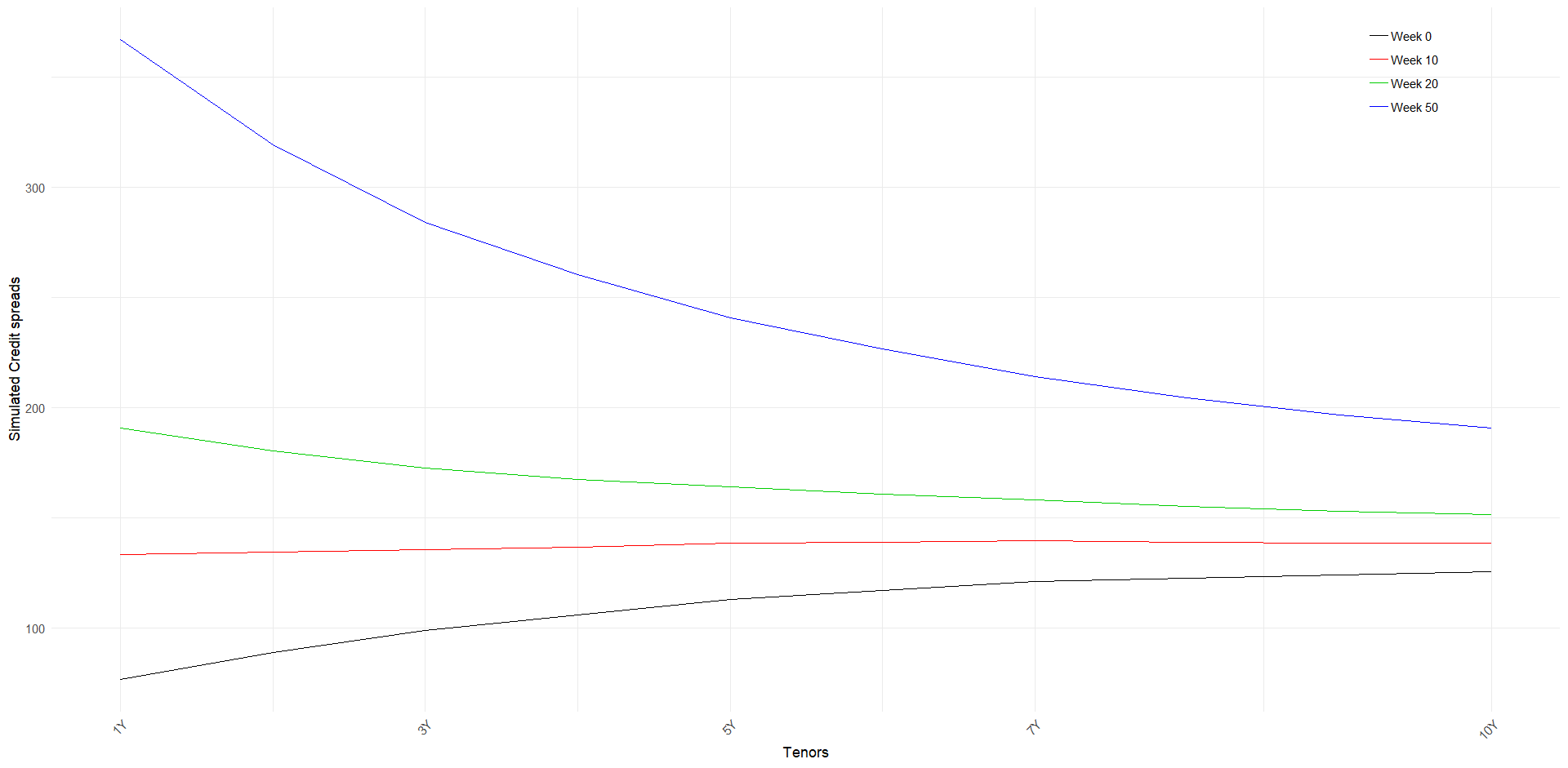}
    \caption{In basis points, the expectation of the term structure of Credit Agricole's credit spreads under the stress test scenario at different dates in the future.}
    \label{fig:RW_4}
\end{figure}

We can observe that, for this example, we have again, indeed produced a very realistic term structure of credit spreads. A notable feature is the inversion of the term structure as the stress increases. This behavior is observed in stressed market conditions, and we provide a few examples in Appendix \ref{sec:appendixInv}.

The process we have conducted in this section and the previous one can be generalized to other models. One needs to apply Theorem \ref{theo:rwrn} after verifying condition \ref{eq:Novikov}. Then, the one needs to find the equivalent of Theorem \ref{theo:linkLam} for their model. This will provide them with the equation to solve in order to calibrate $\alpha_t$. Note that this is straightforward for models with additive noise, as the function $\phi(x)$ is just the identity function.

\section{Conclusion}
In this research, we developed a comprehensive novel methodology for transitioning financial diffusion models from the risk-neutral measure to the real-world measure, leveraging probability theory results. Our framework has been applied to various models, including those with non-additive noise, such as the CIR++ model. The framework's utilization has been successfully demonstrated through numerical simulations on credit spreads' forecasts and stress tests. Drawing upon Credit Agricole's credit spread data relative to France's Government yield, we analyzed the robustness of our theoretical framework. Through applications to Goldman Sachs' 2024 global credit outlook forecasts and the European Banking Authority (EBA) 2023 stress tests, we validate the practical relevance and applicability of our model. The transition resulted in an arbitrage-free model capable of generating realistic credit spread term structures and accurately replicating arbitrarily chosen credit spread curves.

Our methodology offers practitioners and researchers a robust tool that enhances the precision and ease of conducting stress tests and forecasts. It allows for a detailed observation of the term structure of key risk indicators after applying forecasts or stress values to a specific maturity. Furthermore, the model accurately captures the behavior of the term structure in response to changes under high-stress conditions; the term structure of simulated RW credit spreads exhibits inversion of the curve, mirroring a behavior observed in the market. These modified term structures can be used for various risk calculations, including assessing the impact of interest rate variations on portfolio valuations through full revaluation of securities or derivatives. It can also be used for XVA calculations to estimate forward rates for determining future variable cash flows. The potential applications of this framework are extensive and varied, providing significant benefits for stress testing and risk management.

It is noteworthy that the framework does come with some complexities, particularly the need to verify conditions like the Novikov criterion, which can be tedious when dealing with models with non-additive noise. While our current model provides a solid foundation, future research avenues could explore enhancements such as incorporating jumps, generalizing the framework to dimensions greater than one, and applying it to other models and risk indicators, such as credit default swaps (CDS) premiums and foreign exchange (FX) rates. This would further expand the utility and applicability of the framework, making it a versatile tool in financial risk management and forecasting. Furthermore, the main advancement that could enhance the general applicability of the framework would be to generalize the approach by introducing a linearization for equation \eqref{eq:linear} and an error measurement for that linearization. For some models, the relation described by equation \eqref{eq:linear} might be very complex, making it difficult to express the indicator of interest under the real-world probability space as a function of the indicator under the risk-neutral probability space, as we do with Theorem \ref{theo:linkLam}.

\newpage 

\bibliographystyle{unsrtnat}
\bibliography{references}  

\newpage

\appendix

\section{Expression of $f(t)$}
\label{sec:ftproof}
We give here the proof of equation \eqref{eq:ft} which gives: 

\begin{equation*}
    f(t) = 
    \begin{dcases}
        e^{-\frac{1}{2}\kappa t} \sum_{i=1}^m\alpha^i\left( e^{\frac{1}{2}\kappa t_{i+1}} - e^{\frac{1}{2}\kappa t_{i}} \right) \text{ if } t\leq t_N \text{ and $m$ s.t. } t \in [0, t_m] \\
        e^{-\frac{1}{2}\kappa t} \sum_{i=1}^{N-1}\alpha^i\left( e^{\frac{1}{2}\kappa t_{i+1}} - e^{\frac{1}{2}\kappa t_{i}} \right) + e^{-\frac{1}{2}\kappa t}\alpha^N\left( e^{\frac{1}{2}\kappa t} - e^{\frac{1}{2}\kappa t_{N}} \right) \text{ if } t> t_N
    \end{dcases}
\end{equation*}

\begin{proof}
    \begin{itemize}
        \item $t > t_N$, let $m$ be taken, s.t. $t \in [0,t_m]$
            \begin{equation*}
                \begin{aligned}
                    f(t) &= \frac{1}{2}\kappa e^{-\frac{1}{2}\kappa t}\int_0^t \alpha(u)e^{\frac{1}{2}\kappa u}du \\
                    &=\frac{1}{2}\kappa e^{-\frac{1}{2}\kappa t} \int_0^t \sum_{i=1}^N \mathbb{1}_{u\in [t_i, t_{i+1}]}\alpha^i e^{\frac{1}{2}\kappa u} du \\
                    &=\frac{1}{2}\kappa e^{-\frac{1}{2}\kappa t} \sum_{i=1}^m \int_{t_i}^{t_{i+1}} \alpha^i e^{\frac{1}{2}\kappa u} du \\
                    f(t) &= e^{-\frac{1}{2}\kappa t} \sum_{i=1}^m\alpha^i\left( e^{\frac{1}{2}\kappa t_{i+1}} - e^{\frac{1}{2}\kappa t_{i}} \right)
                \end{aligned}
            \end{equation*}
    
        \item $t\leq t_N$, let $m$ be taken, s.t. $t \in [0,t_m]$
        \begin{equation*}
            \begin{aligned}
                f(t) &= \frac{1}{2}\kappa e^{-\frac{1}{2}\kappa t}\int_0^t \alpha(u)e^{\frac{1}{2}\kappa u}du \\
                &= \frac{1}{2}\kappa e^{-\frac{1}{2}\kappa t}\int_0^{t_N} \alpha(u)e^{\frac{1}{2}\kappa u}du + \frac{1}{2}\kappa e^{-\frac{1}{2}\kappa t}\int_{t_N}^t \alpha(u)e^{\frac{1}{2}\kappa u}du \\
                &= \frac{1}{2}\kappa e^{-\frac{1}{2}\kappa t}\int_0^{t_N} \sum_{i=1}^{N-1} \mathbb{1}_{u\in [t_i, t_{i+1}]}\alpha^i e^{\frac{1}{2}\kappa u} du + \frac{1}{2}\kappa e^{-\frac{1}{2}\kappa t}\int_{t_N}^t \alpha(u)e^{\frac{1}{2}\kappa u}du \\            
                &= \frac{1}{2}\kappa e^{-\frac{1}{2}\kappa t} \sum_{i=1}^{N-1}{ \int_{t_i}^{t_{i+1}}  \alpha^i e^{\frac{1}{2}\kappa u} du} + \frac{1}{2}\kappa e^{-\frac{1}{2}\kappa t}\int_{t_N}^t \alpha^N e^{\frac{1}{2}\kappa u}du \\ 
                f(t)&= e^{-\frac{1}{2}\kappa t} \sum_{i=1}^{N-1}\alpha^i\left( e^{\frac{1}{2}\kappa t_{i+1}} - e^{\frac{1}{2}\kappa t_{i}} \right) + e^{-\frac{1}{2}\kappa t}\alpha^N\left( e^{\frac{1}{2}\kappa t} - e^{\frac{1}{2}\kappa t_{N}} \right)
            \end{aligned}
        \end{equation*} 
    \end{itemize}
\end{proof}

\section{Parameters of the CIR++ Intensity model}
\label{sec:CIRParams}
We take the parameters that resulted from the global calibration described in \cite{SarrCreditSpreads}.

\begin{table}[H]
    \centering 
    \begin{tabular}{c c c c}
        \hline \hline 
         & & & \\ [-0.4em] 
        $\kappa$ & $\theta$ & $\sigma$ & $y_0$ \\ [0.4em] 
        \hline \hline 
        & & & \\
        $5.138 \times 10^{-1}$ & $1.497 \times 10^{-2}$ & $8.904 \times 10^{-2}$ & $4.348 \times 10^{-2}$ \\ [0.5em] 
        & & & \\
    \end{tabular}
    \caption{Parameters calibrated for the \textit{Global scenario}}
    \label{table:resbase2}
\end{table}

\section{5Y horizon cumulative hazard rate behaviour in the RW simulations}
\label{sec:RWcumhaz}
\vspace{-9.5mm}
\begin{figure}[H]
    \centering
    \begin{minipage}{0.495\textwidth}
        \begin{figure}[H]
            \centering 
            \includegraphics[width=8.5cm]{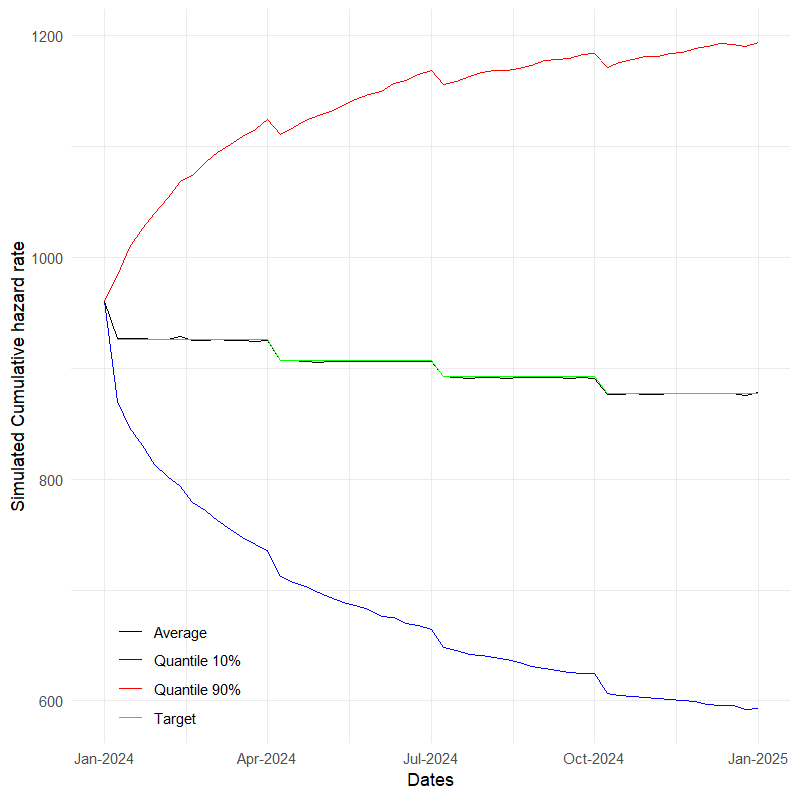} 
            \vspace{-1.5mm}
            \caption{In BP, 5Y Credit Agricole’s cumulative hazard rate under the forecasts for the 5Y scenario}
            \label{fig:avgdiff}
        \end{figure}
    \end{minipage}\hfill
    \begin{minipage}{0.495\textwidth}
        \begin{figure}[H]
            \centering
            \includegraphics[width=8.6cm]{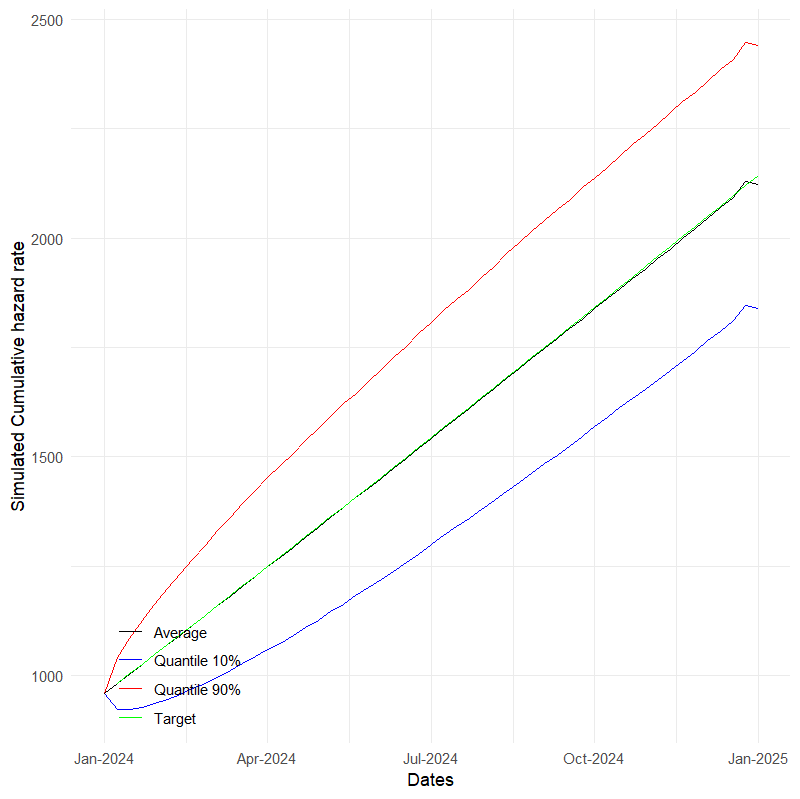} 
            \vspace{-1.7mm}
            \caption{In BP, 5Y Credit Agricole’s cumulative hazard rate and target values (EBA stressed values)}
            \label{fig:quantdiff}
        \end{figure}
    \end{minipage}
\end{figure}

\section{Inversion of the credit spreads curve}
\label{sec:appendixInv}
As mentioned in Section \ref{sec:rwresults}, we observed an inversion of the credit spread curve during periods of stress. Figure \ref{fig:invplot} shows inverted credit spreads during the stress period (dates within the Eurozone sovereign debt crisis in 2012 and 2013) compared to standard condition credit spreads (2022 and 2023).

\begin{figure}[H]
    \centering
    \includegraphics[width=16.5cm, height=7.5cm]{./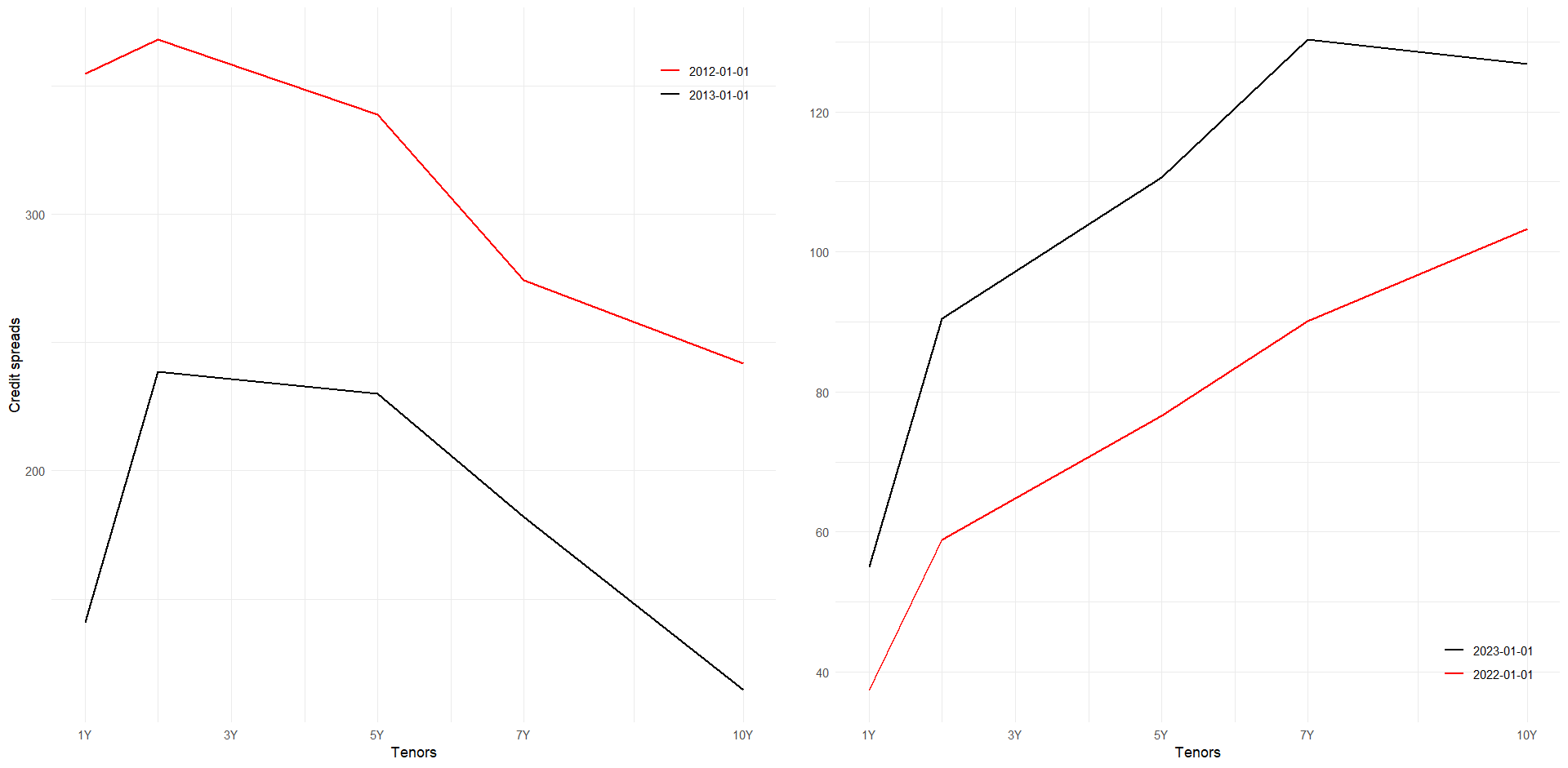}
    \caption{In basis points, Credit Agricole's credit spread term structure inverted during stress periods (2012/01/01 and 2013/01/01) and regular under standard conditions (2022/01/01 and 2023/01/01).}
    \label{fig:invplot}
\end{figure}

\end{document}